\providecommand{\keywords}[1]
{
  \small	
  \textbf{\textit{Keywords}} #1
}
\title{On the duality between homological quantum codes of a hypermap and its dual hypermap}
\author{Zihan Lei}
\date{
School of Mathematical Sciences USTC, China\\
\today}
\newtheorem{theorem}{Proposition}
\DeclareMathOperator{\im}{im}
\begin{document}

\maketitle

\begin{abstract}
 From a given topological hypermap \(H\), we define two related hypermaps \(H^\triangle\) and \(H^\nabla\) as complements of the ordinary dual hypermap \(H^*\) along with the concepts of their edge hypermap quantum codes \(\mathcal{C}^\triangle\) and \(\mathcal{C}^\nabla\). We then show that, when the sets of special darts are naturally corresponded, the duality between the ordinary hypermap quantum code \(\mathcal{C}\) from \(H\) and the one \(\mathcal{C}^*\) from \(H^*\) can be greatly simplified to the duality between \(\mathcal{C}^\triangle\) and \(\mathcal{C}^\nabla\).
\end{abstract}

\keywords{hypermap, quantum code, surface code,  orientable surface, bipartite graph}

\maketitle

\section{Introduction}

Martin \cite{Martin} has described a new kind of stabilizer codes called hypermap-homology quantum codes (Abbreviated as `hypermap codes' or `hypermap quantum codes' in the following part.), which are homological quantum codes constructed from the \(\mathbb{Z}_2\)-chain complexes of hypermap homology \cite{Homology}, a kind of homology rooted in the geometry of topological hypermaps.  Having a topological hypermap \(H=(\Sigma,\Gamma)\) at hand , Pradeep \cite{Pradeep} then showed that there is a surface code \cite{PHD,Projective} which is equivalent to the hypermap code of \(H\) and whose cell-structure is obtained by adding curves (We will call them Pradeep's curves later.) to the orientable surface \(\Sigma\) of $H$. 

Pradeep's work means that the parameters of hypermap codes cannot be superior to a kind of well known stabilizer codes, i.e, the surface codes. On the other hand, hypermap codes can be regarded as a new approach to obtain surfaces codes and topological hypermaps themselves also have rich connections with geometry and topology. Therefore, we believe that further exploration of the connections between quantum error correction codes and the math of hypermaps is worthy. As a preliminary attempt, we try to study the relations between quantum code $\mathcal{C}$ constructed from a hypermap using Martin’s method and the code $\mathcal{C}^*$ from the dual hypermap \cite{Martin} by the same method. During the exploring process, we will mainly use pradeep's curves as a visual aid.

The main contributions of the paper are as follows: (1) In \cite{Martin}, Martin has related the classical cohomology from the bipartite graph \(\Gamma^*\) of the dual hypermap $H^*=(\Sigma_{op},\Gamma^*)$ to the hypermap cohomology of \(H\) in his Proposition 4.21. In this article, we find that there is actually a very simple geometrical relation between the bipartite graph \(\Gamma^*\) and the Praddeep's curves of $H$. (2) We defined another kind of dual hypermap \(H^\triangle\) with respect to $H$, which we call $\triangle$-dual, along with the concept of it's contrary map $H^\nabla$, and we give a proof that $H^\nabla$ is exactly the  $\triangle$-dual of the dual hypermap $H^*$ in the sense of strong isomorphism, i.e, \(H^\nabla=(H^*)^\triangle\). Meanwhile, we proposed a new kind of homological quantum codes that can be constructed from a given topological hypermaps, which we call the edge hypermaps codes, while the Martin's type of hypermap codes are renamed the face hypermap codes. Then we find that for a given topological hypermap, the face hypermap code is equivalent to the edge hypermap code of its $\triangle$-duals, thus, we can transform the pair $(\mathcal{C},\mathcal{C}^*)$  of face hypermap codes of $H$ and $H^*$ to the equal pair $(\mathcal{C}^\triangle,\mathcal{C}^\nabla)$ of edge hypermap codes of $H^\triangle$ and $H^\nabla$. This transformation illuminates the relations between $\mathcal{C}$ and $\mathcal{C}^*$ in the following aspects: (i) as graphs, the bipartite graphs (hypergraphs) of $H^\triangle$ and $H^\nabla$ coincide, the only difference is that their sets of hyperedges and hypervertices are opposite. (ii) the sets of special darts of $\mathcal{C}^\triangle$ and $\mathcal{C}^\nabla$ coincide. (iii) the combinatorial hypermap of $H^\nabla$ is just that of \(H^\triangle\) with $\alpha$ and $\sigma$ interchanged.

The article is structured as follows. In section 2, we review the necessary background  of homological quantum codes, conbinatorial and topological hypermaps, then hypermap quantum codes and their Pradeep's curves. In section 3.1, we give a geometric duality between Pradeep's surface code and the dual hypermap. In section 3.2, which is the main part, we define $\triangle$-duals, contrary maps and edge hypermap quantum codes, and we use them to uncover the relations between $\mathcal{C}$ and $\mathcal{C}^*$.

\section{Background}
\subsection{Homological quantum codes}
In this section, we briefly review the part of the construction of homological quantum codes that we need , assuming the reader is familiar with the basis of CSS stabilizer codes and surface codes.  A chain complex is a sequence of vector spaces \(V_i\) with linear morphisms \(d_i\) in between satisfying \(d_{i}\circ d_{i+1}=0\). In the context of homological quantum codes, we consider only the shortest kind, which consists of three vector spaces and two morphisms as follow:
\begin{figure}[ht]
\centering
\begin{tikzpicture}[codi]
% diagram here
\obj{V_{i+1} & V_i & V_{i-1} \\};
\mor V_{i+1} {d_{i+1}}:-> V_i ;
\mor V_i {d_i}:-> {V_{i-1}} ;
\end{tikzpicture}
\end{figure}

 \noindent where all vector spaces are limited to \(\mathbb{Z}_2\)-vector spaces. To construct an error correcting code, we choose a basis for each vector space. Denoting \([d_i]\) the matrix of \(d_i\) for these bases, we have \([d_{i}\circ d_{i+1}]=[d_i][d_{i+1}]=0\), which means that we can use \(A=\)
$\big(\begin{smallmatrix}
  H_X & 0\\
  0 & H_Z
\end{smallmatrix}\big)$ 
as the binary check matrix for a Calderbank, Shor, and Steane (CSS)  code with \(H_X=[d_i]\) and \(H_Z=[d_{i+1}]^T\). If we denote \(C_X\) and \(C_Z\) the kernel of the matrices \(H_X\), \(H_Z\), then by the standard theory of CSS code, the number of logical qubits should be
\begin{equation}
    k=n-\dim(C_X^\perp)-\dim(C_Z^\perp)
\end{equation}
where n denotes the dimension of the central vector space \(V_i\). On the other hand, by definition of homology groups, that is, the quotient spaces \(H_i=\ker d_{i}/\im d_{i+1}\), we also have \(\dim H_i=\dim(\ker{d_i})-\dim(\im{d_{i+1}})=n-\dim{C_X^\perp}-\dim{C_Z^\perp}=k\), which indicates that the numbers of logical qubits of these codes depend only on dimensions of the central homology groups of their chain complexes. We call these codes the homological quantum code.
\subsection{Hypermaps and hypermap codes}

We're going to review some basic concepts of hypermaps, their \(\mathbb{Z}_2\)-homology and hypermap-homology quantum codes.

There are two kinds of hypermaps, combinatorial hypermaps and topological hypermaps, which are mutually transformable. A \emph{combinatorial hypermap} is easy to define, it's a pair of elements \((\alpha, \sigma)\) of the group \(S_n\) of all permutations on \(B_n=\{1,2,...,n\}\) under composition with \(<\sigma, \alpha>\) transitive on \(B_n\). Here, `transitive' means that every two elements of \(B_n\) can be transformed to each other by an element of the subgroup \(<\sigma, \alpha>\). To define topological hypermaps, we need to  define the concept of a hypergraph at first, here, we use the Walsh representation of hypergraph as its definition, which is equivalent to the original definition of hypergraph. A (connected) \emph{hypergraph} is simply a (connected) bipartite graph. However, the edges of the original bipartite graph are renamed the darts of the hypergraph, while we call the vertices of the bipartite graph that are naturally divided into two separate subsets \(V\) and \(E\) the `\emph{(hyper)virtices}' and `\emph{(hyper)edges}' of the corresponding hypergraph. Then, a (oriented) topological hypermap $H=(\Sigma, \Gamma)$ is an embedding of a hypergraph $\Gamma$ into a oriented compact surface \(\Sigma\), with $\Sigma\setminus{\Gamma}$ consists of finite many separated open sets \(e_\alpha\), each \(e_\alpha\) homeomorphics to $\mathbf{R}^2$ 
\begin{equation}
    \Sigma\setminus{\Gamma}=\bigsqcup_{\alpha}^{} e_\alpha 
\end{equation}
, and is called a face or a (open) cell.

To transform a topological hypermap to a combinatorial one, note first that at every sites (we use the world `sites' to replace `vertices' of the original bipartite graph, which consist of both vertices and edges of the hypergraph.), the hypergraph is star-shaped locally---there is one site and the darts that incident to it. Also note that because the surface is oriented, there exist a global normal vector field with respect to the orintation. By seeing from the top of the normal vector at each site, we can orient the darts clockwise or counterclockwise. We denote B the set of all darts of the hypergraph, which is assumed to be label by the number set \(B_n\) with \(n=|B|\) , then define a permutation \(\alpha\in{S_B}\) which takes a dart to the next one that clockwise around the edge to which it incident ,(with the normal vector at that edge pointing to us.)  and another permutation \(\sigma\in{S_B}\) which takes a dart to the next one that counterclockwise around the vertex to which it incident . Now we have the pair \((\alpha, \sigma)\), and the transitivity of \(<\sigma, \alpha>\) is easily seem from the connectivity of the hypergraph. When drawing pictures, we usually use a small circle to represent a vertex, a small square to represent an edge, and let the normal vectors point out from the papper. (Of course, we can only draw part of the whole hypermap this way.) Also, we write the lable (a number in \(B_n\)) of a dart inside the face that counterclockwise incident to the dart with respect to to normal vector at its edge, and say that the dart belongs to this face, this way, one dart can only belongs to one face. 

Figure \ref{fig:1} shows an example of a picture of a hypermap, as we can see, the darts labeled 3, 11 and 7 belong to face \(f_2\), while the darts labeled 5 and 6 belong to the face \(f_1\). The darts 4 doesn't belong to any of these two faces but another face which is not explicitly labeled here. Also, we have \(\sigma(4)=5\), \(\alpha(6)=7\) and \(\alpha^{-1}(4)=7\).\par
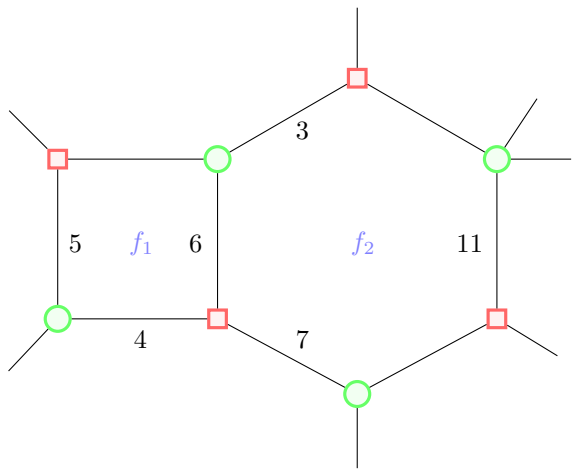
\begin{figure}
\centering
\begin{tikzpicture}[
roundnode/.style={circle, draw=green!60, fill=green!5, very thick, minimum size=2mm},
squarednode/.style={rectangle, draw=red!60, fill=red!5, very thick, minimum size=2mm},
fakenode/.style={rectangle, draw=orange!0, fill=blue!0, very thick, minimum size=2mm},
]
\node[squarednode]      (edge 1){} ;
\node[fakenode]        (site 1)       [below=0.8cm of edge 1]{};
\node[roundnode]        (vertex 3)       [below=2.8cm of site 1]{};
\node[roundnode]      (vertex 2)       [left=1.532cm of site 1] {};
\node[roundnode]      (vertex 1)       [right=1.532cm of site 1] {};
\node[squarednode]      (edge 3)       [below=1.8cm of vertex 2] {};
\node[squarednode]      (edge 2)       [below=1.8cm of vertex 1] {};
\node[roundnode]      (vertex 4)       [left=1.8cm of edge 3] {};
\node[squarednode]      (edge 4)       [left=1.8cm of vertex 2] {};
\node[fakenode]        (site 3)       [above=0.507cm of edge 4]{};
\node[fakenode]        (site 4)       [left=0.507cm of site 3 ]{};
\node[fakenode]        (site 6)       [below=0.507cm of vertex 4]{};
\node[fakenode]        (site 5)       [left=0.507cm of site 6]{};
\node[fakenode]        (site 2)       [above=0.8cm of edge 1]{};
\node[fakenode]        (site 7)       [below=0.8cm of vertex 3]{};
\node[fakenode]        (site 11)       [right=0.3cm of vertex 1]{};
\node[fakenode]        (site 12)       [right=0.8cm of vertex 1]{};
\node[fakenode]        (site 13)       [above=0.666cm of site 11]{};
\node[fakenode]        (site 9)       [right=0.666cm of edge 2]{};
\node[fakenode]        (site 8)       [below=0.3cm of site 9]{};
\node[fakenode]        (site 10)       [above=0.1cm of site 1]{};
\node[fakenode]        (site 14)       [left=0.366cm of site 10]{3};
\node[fakenode]        (site 15)       [below=2.27cm of site 14]{7};
\node[fakenode]        (site 16)       [below=0.8cm of vertex 2]{};
\node[fakenode]        (site 17)       [right=2.9cm of site 16]{11};
\node[fakenode]        (site 18)       [left=3.1cm of site 17]{6};
\node[fakenode]        (site 19)       [left=1.15cm of site 18]{5};
\node[fakenode]        (site 20)       [left=1.7cm of site 15]{4};
\node[fakenode]        (site 21)       [right=0.35cm of site 19]{\textcolor{blue!50}{\( f_1\)}};
\node[fakenode]        (site 22)       [left=0.8cm of site 17]{\textcolor{blue!50}{\(f_2\)}};

\draw[] (edge 1) -- (site 2);
\draw[] (edge 2) -- (site 8);
\draw[] (edge 4) -- (site 4);
\draw[] (vertex 1) -- (site 13);
\draw[] (vertex 1) -- (site 12);
\draw[] (vertex 3) -- (site 7);
\draw[] (vertex 4) -- (site 5);
\draw[] (vertex 1) -- (edge 2);
\draw[] (vertex 1) -- (edge 1);
\draw[] (vertex 2) -- (edge 1);
\draw[] (vertex 2) -- (edge 3);
\draw[] (vertex 3) -- (edge 3);
\draw[] (vertex 3) -- (edge 2);
\draw[] (vertex 2) -- (edge 4);
\draw[] (vertex 4) -- (edge 4);
\draw[] (vertex 4) -- (edge 3);
\end{tikzpicture}
\caption{Part of a hypermap}
\label{fig:1}
\end{figure}

Here, we make some remarks about the notations that we will encounter. For each number \(i\in B_n\), there is exactly one dart $\omega_i\in B$ that has \(i\) as its label. When a permutation \(h\in S_B\) acts on $\omega_i$, we denote it by $h\omega_i$, when a permutation \(g\in S_{B_n}\) acts on $i$, we denote it by $g(i)$. However, permutations like $\alpha$, $\sigma$ defined above can be seen to act on both the darts and their labels, in which situations, we treat expressions like \(\omega_{\alpha^{-1}(i)}\) and $\alpha^{-1}\omega_i$ the same by abusing of notations.
This would not cause any problem when there is only one topological hypermap. However, when there are more than one hypermaps with a common lebel set $B_n$, which is case in the third part of this article, one should be more careful.
For expressions like \(\alpha^{-1}\sigma\), we take the convention in \cite{Martin}, that is, acting from left to right. 
\begin{figure}
\centering
\begin{tikzpicture}[
roundnode/.style={circle, draw=black!100, fill=green!0, minimum size=2mm},
squarednode/.style={rectangle, draw=black!100, fill=red!0, minimum size=2mm},
fakenode/.style={rectangle, draw=orange!0, fill=blue!0, very thick, minimum size=2mm},
]
\node[fakenode]        (site 1)       {};
\node[squarednode]      (edge 1)       [left=1.8 of site 1] {};
\node[squarednode]      (edge 5)       [right=1.8 of site 1] {};
\node[fakenode]        (site 2)      [left=1.214 of site 1] {};
\node[roundnode]      (vertex 2)       [above=1.214cm of site 2] {};
\node[squarednode]      (edge 3)       [above=1.8 of site 1] {};
\node[fakenode]        (site a^1)      [right=0.45 of edge 3 ]{\(\alpha^{-1}(i_1)\)};
\node[roundnode]      (vertex 4)       [right=2.628 of vertex 2] {};
\node[roundnode]      (vertex 8)       [below=2.628 of vertex 2] {};
\node[squarednode]      (edge 7)       [below=1.8 of site 1] {};
\node[roundnode]      (vertex 6)       [below=2.628 of vertex 4] {};

\node[fakenode]        (site i_1)      [right=0.45 of vertex 2] {\(i_1\)};
\draw[] (vertex 2) -- (edge 3);
\node[fakenode]        (site i_2)      [below=0.45 of vertex 4]{\(i_2\)};
\draw[] (vertex 4) -- (edge 5);
\node[fakenode]        (site i_3)      [left=0.45 of vertex 6]{\(i_3\)};
\draw[] (vertex 6) -- (edge 7);
\node[fakenode]        (site i_4)      [above=0.45 of vertex 8]{\(i_4\)};
\draw[] (vertex 4) -- (edge 3);
\node[fakenode]        (site a^3)      [left=0.45 of edge 7 ]{\(\alpha^{-1}(i_3)\)};
\draw[] (vertex 8) -- (edge 7);
\node[fakenode]        (site op)      [below=0.05 of site i_4 ]{};
\node[fakenode]        (site a^2)      [right=3.2 of site op  ]{\(\alpha^{-1}(i_2)\)};
\draw[] (vertex 6) -- (edge 5);
\node[fakenode]        (site a_4)      [left=3.55 of site i_2 ]{};
\node[fakenode]        (site a^4)      [above=0.01 of site a_4 ]{\(\alpha^{-1}(i_4)\)};
\draw[] (vertex 2) -- (edge 1);
\draw[] (vertex 8) -- (edge 1);
\node[roundnode]      (vertex 2)       [above=1.214cm of site 2] {};
\end{tikzpicture}
\caption{A face for \(s=4\)} 
\label{fig:2}
\end{figure}
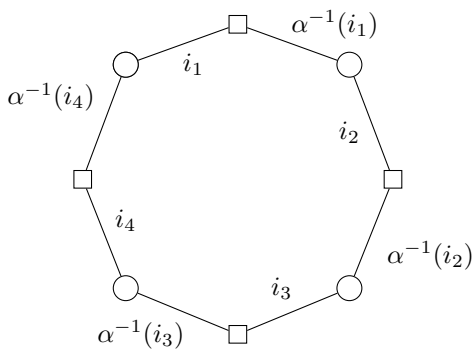
Now, because any face of $H$ can be uniquely determined by the darts that belongs to it, which forms an orbit of \(<\alpha^{-1}\sigma>\), (See Figure \ref{fig:2}) the faces are in one-to-one correspondence with the orbits of \(<\alpha^{-1}\sigma>\). For the same reason, edges (vertices) are in one-to-one correspondence with the orbits of \(<\alpha^{-1}>\) ($<\sigma>$). If we denote $O_{\owns{i}}$ the orbits (of some given subgroup of $S_B$) that $i$ belongs to, then we understand, for example, that $v_{\owns{i}}$ is the vertex on which dart $i$ incident.

On the contrary, we can also construct Topological hypermap from a conbinatorial one \cite{Martin}, which is not needed here. See Appendix A for this along with an example.

From a topological hypermap, we have four \(\mathbb{Z}_2\)-vector spaces \(\mathcal{V}\), \(\mathcal{E}\), \(\mathcal{F}\), \(\mathcal{W}\) with bases the vertices, edges, faces and darts of the hypermap. Define \(d_2(f)=\sum_{i\in{f}}^{} \omega_i\), which map a face to the sum of the darts that belong to it. We also define \(d_1(w_i)= v_{\owns{i}} + v_{\owns{\alpha^{-1}(i)}}\) , and \(\iota(e)=\sum_{i\in{e}}\omega_i\) which maps an edge to all darts \(\omega_i\) that incident on it. Based on these definitions, we get the linear map \(d_2: \mathcal{F} \rightarrow \mathcal{W}\), \(d_1: \mathcal{W} \rightarrow \mathcal{V}\), and \(\iota : \mathcal{E} \rightarrow \mathcal{W}\) by extending linearly to the whole vector spaces. Now, it is straightforward to check that \(d_1\circ{d_2}=0\), \(d_1\circ\iota=0\). The second identity tells us that we can define a map \(\partial_1\) from the quotient space \(\mathcal{W}/\iota(\mathcal{E})\) to \(\mathcal{V}\) by \(\partial_1(\omega_i + \iota(\mathcal{E}))=d_1(\omega_i)\) without ambiguity. We also define \(p : \mathcal{W} \rightarrow \mathcal{W}/\iota(\mathcal{E})\) the natural projection and \(\partial_2=p\circ{d_2}\). All these maps are given in the commutative diagram shown in Figure \ref{jiaohuantu}. Now, we have constructed a short chain complex \(\partial_1\circ\partial_2=0\), form which we get what we called hypermap-homology.
\begin{wrapfigure}{r}{0.4\textwidth}
\centering
\begin{tikzpicture}[
squarednode/.style={rectangle, draw=black!0, fill=green!0, very thick, minimum size=2mm},
]
\node[squarednode]      (site 1)        {\(\mathcal{W}\)};
\node[squarednode]      (site 2)       [right=1.6 of site 1] {\(\mathcal{V}\)};
\node[squarednode]      (site 3)       [left=1.6 of site 1] {\(\mathcal{F}\)};
\node[squarednode]      (site 4)       [below=1 of site 1] {\(\mathcal{W}/\iota(\mathcal{E})\)};
\node[squarednode]      (fake l)       [left=0.3 of site 4]{};
\node[squarednode]      (fake r)       [right=0.3 of site 4]{};
\node[squarednode]      (p1)       [above=0.1 of fake l]{\(\partial_2\)};
\draw[->] (site 3) -- (site 4);
\node[squarednode]      (p2)       [above=0.1 of fake r]{\(\partial_1\)};
\draw[->] (site 4) -- (site 2);
\node[squarednode]      (fake l+)       [left=0.4 of site 1]{};
\node[squarednode]      (fake r+)       [right=0.4 of site 1]{};
\node[squarednode]      (d1)       [above=-0.2 of fake l+]{\(d_2\)};
\node[squarednode]      (d2)       [above=-0.2 of fake r+]{\(d_1\)};
\draw[->] (site 1) -- (site 2);
\draw[->] (site 3) -- (site 1);
\node[squarednode]      (center)       [below=0.3 of site 1]{};
\node[squarednode]      (p)       [right=-0.1 of center]{\(p\)};
\draw[->] (site 1) -- (site 4);
\end{tikzpicture}
\caption{Definition of \(\partial_1\), \(\partial_2\)}
\label{jiaohuantu}
\end{wrapfigure}
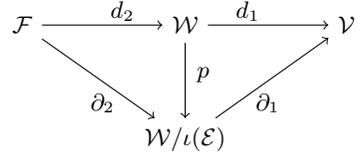
To construct a hypermap-homology quantum code, we need to choose bases for \(\mathcal{F}\), \(\mathcal{W}/\iota(\mathcal{E})\) and \(\mathcal{V}\). For \(\mathcal{F}\) and \(\mathcal{V}\), the faces and vertices are natural bases. To choose a basis for \(\mathcal{W}/\iota(\mathcal{E})\), firstly, we choose one \emph{special dart} for each edge \(e\), that is, a dart \(i\) that incident on \(e\), and denote \(S\)
for all these special darts, then it is straightforward to prove that the elements \(\omega_i+\iota(\mathcal{E})\) with \(i\in{B\setminus{S}}\) form a basis for \(\mathcal{W}/\iota(\mathcal{E})\). 

\subsection{Relation between hypermap codes and surface codes}
 In this section, we review pradeep's construction of equivalent surface codes from hypermap-homology quantum codes. 

To transform a hypermap-homology quantum code to a surface code, we firstly draw a (Pradeep's) curve (A smooth map from \([0,1]\) to the surface.) connecting vertices \(v_{\owns{i}}\) and \(v_{\owns{\alpha^{-1}(i)}}\) for each dart \(\omega_i\in{B}\), and label the curve `\(i\)' again! These curves should satisfy the following conditions:
\begin{itemize}
  \item None of them can be a single point.
  \item Curve \(i\) must lies entirely within the face to which dart \(\omega_i\) belongs excepting its end points \(v_{\owns{i}}\) and \(v_{\owns{\alpha^{-1}(i)}}\). 
  \item They cannot have any intersections or self-intersections  within a face.
\end{itemize}
Then for each special dart \(\omega_i\), we erase the curve \(i\). Now, the remaining curves together with the vertices of the original hypergraph form the 1-skeleton of a cell structure (CW-complex) on the surface \(\Sigma\). From this cell structure, we can make an surface code with underline chain complex the ordinary boundary maps \(\partial^*_2\) between the 2-cells and 1-cells, and \(\partial^*_1\) between the 1-cells and 0-cells. Algebraic topology shows that its homology group \(H^*_1=\ker \partial^*_{1}/\im \partial^*_{2}\) is just the singular homology group \(H_1(X)\) \cite{Hatcher}.\par
 Contrast of the two quantum codes can now be done step by step. For each face \(f_i\) of the original hypermap, we denote \(\omega_1^i\), \(\omega_2^i\), \(\omega_3^i\), ... , \(\omega_s^i\) the darts that belongs to it, and assume that \(\omega_{k_1}^i\), \(\omega_{k_2}^i\), \(\omega_{k_3}^i\), ... \(\omega_{k_r}^i\) (\(k_i\leqslant{s}\)) are special darts. Recall that all the equivalence classes \(\omega_i+\iota({\mathcal{E}})\) of the non-special darts \(\omega_i\in{B\setminus{S}}\) form a basis for \(\mathcal{W}/\iota(\mathcal{E})\), under this basis, we can write out the boundary of \(f_i\) as
 \begin{equation}
    \begin{aligned}
    \partial_2{f_i} & = \sum_{n=1}^{s}(\omega_{n}^{i}+\iota({\mathcal{E}}))\\
    & = \sum_{n\neq{k_j}, n\in\{1,\cdots,s\}}(\omega_{n}^{i}+\iota({\mathcal{E}}))\\& +\sum_{j=1}^{r}\sum_{
    \omega\in<\alpha^{-1}>\cdot{\omega_{k_j}^{i}}
    ,\omega\neq{\omega_{k_j}^{i}}
    }
 (\omega +\iota({\mathcal{E}}))
 \end{aligned}
\end{equation}
where \(<\alpha^{-1}>\cdot{\omega_{k_j}^{i}}\) means the orbit of \(\omega_{k_j}^{i}\) under the action of \(<\alpha^{-1}>\). All darts in the second row of equation (3) are non-special and therefore passes their labels to some of Pradeep's curves whose union forms the geometric boundary of a 2-cell \({f}^*_i\) in the new cell structure on \(\Sigma\) and the boundary matrix \([\partial_2^*f_i^*]\) under the  natural basis of Pradeep's curves are the same as the the matrix \([\partial_2f_i]\) of the hypermap boundary under the basis of non-special darts. Next, for each non-special dart \(\omega_i\), the boundary \(\partial_1(\omega_i+\iota(\mathcal{E}))\) are \(v_{\owns{i}}\) and \(v_{\owns{\alpha^{-1}(i)}}\), which is the same as the \(\partial^*_1\) boundary of its corresponding curve \(i\) ---Again, same matrix. By definition of homological quantum codes, these two codes are essentially the same code. Thus, We can reduce every hypermap homology code to an equivalent surface code.\par

    \section{Quantum codes of dual hypermaps }

 \subsection{Dual hypermap and Pradeep's surface code}
 There are concepts of dual hypermaps in both topological and combinatorial sense \cite{Martin}. In this section, we show that the structure of  Pradeep's surface code of the original hypermap will be more clear with the help of its dual hypermap.\par
 From a topological hypermap \(H=(\Sigma,\Gamma)\), we can construct the \emph{topological dual} \(H^*=(\Sigma_{op},\Gamma^*)\) as a hypermap satisfying the following conditions:
 \begin{itemize}
  \item \(\Sigma_{op}\) is the surface \(\Sigma\) with opposite orientation.
  \item The edges of \(H^*\) are the edges of \(H\).
  \item There is precisely one vertex of \(H^*\) for each face of \(H\), inside that face.
  \item For each dart labeled \(i\) of \(H\), there is precisely one dart of \(H^*\) that goes from \(e_{\owns{i}}\) to the vertex of \(H^*\) which lies inside the face \emph{\(f_{\owns{i}}\)} that \(i\) belongs to. We label this dart `\(i\)' again and all darts of \(H^*\) are therefore labeled.
  \item The tangent vector of dart \(i\) of \(H^*\) which starts at  \(e_{\owns{i}}\) must lie between tangent vectors of darts \(i\) and \(\alpha^{-1}(i)\) of \(H\) which start at the same edge.
  \item The darts of \(H^*\) must lie within faces excepting their end points \(e_{\owns{i}}\), and have no intersection or self-intersection inside the faces.
\end{itemize}

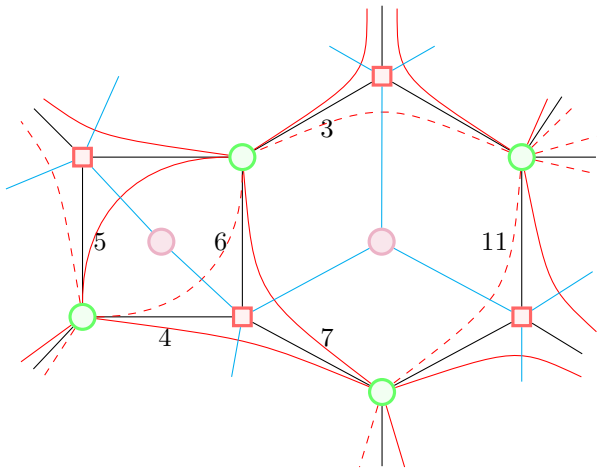
\begin{figure}[ht]
\centering
\begin{tikzpicture}[
roundnode/.style={circle, draw=green!60, fill=green!5, very thick, minimum size=2mm},
squarednode/.style={rectangle, draw=red!60, fill=red!5, very thick, minimum size=2mm},
fakenode/.style={rectangle, draw=orange!0, fill=blue!0, very thick, minimum size=2mm},
Roundnode/.style={circle, draw=purple!30, fill=purple!10, very thick, minimum size=1mm},
]
\node[squarednode]      (edge 1){} ;
\node[fakenode]        (site 1)       [below=0.8cm of edge 1]{};
\node[roundnode]        (vertex 3)       [below=2.8cm of site 1]{};
\node[roundnode]      (vertex 2)       [left=1.532cm of site 1] {};
\node[roundnode]      (vertex 1)       [right=1.532cm of site 1] {};
\node[squarednode]      (edge 3)       [below=1.8cm of vertex 2] {};
\node[squarednode]      (edge 2)       [below=1.8cm of vertex 1] {};
\node[roundnode]      (vertex 4)       [left=1.8cm of edge 3] {};
\node[squarednode]      (edge 4)       [left=1.8cm of vertex 2] {};
\node[fakenode]        (site 3)       [above=0.507cm of edge 4]{};
\node[fakenode]        (site 4)       [left=0.507cm of site 3 ]{};
\node[fakenode]        (site 6)       [below=0.507cm of vertex 4]{};
\node[fakenode]        (site 5)       [left=0.507cm of site 6]{};
\node[fakenode]        (site 2)       [above=0.8cm of edge 1]{};
\node[fakenode]        (site 7)       [below=0.8cm of vertex 3]{};
\node[fakenode]        (site 11)       [right=0.3cm of vertex 1]{};
\node[fakenode]        (site 12)       [right=0.8cm of vertex 1]{};
\node[fakenode]        (site 13)       [above=0.666cm of site 11]{};
\node[fakenode]        (site 9)       [right=0.666cm of edge 2]{};
\node[fakenode]        (site 8)       [below=0.3cm of site 9]{};
\node[fakenode]        (site 10)       [above=0.1cm of site 1]{};
\node[fakenode]        (site 14)       [left=0.366cm of site 10]{3};
\node[fakenode]        (site 15)       [below=2.27cm of site 14]{7};
\node[fakenode]        (site 16)       [below=0.8cm of vertex 2]{};
\node[fakenode]        (site 17)       [right=2.9cm of site 16]{11};
\node[fakenode]        (site 18)       [left=3.1cm of site 17]{6};
\node[fakenode]        (site 19)       [left=1.15cm of site 18]{5};
\node[fakenode]        (site 20)       [left=1.7cm of site 15]{4};
\node[Roundnode]        (Vertex 1)       [right=0.4cm of site 19]{};
\node[Roundnode]        (Vertex 2)       [left=1cm of site 17]{};

\draw[] (edge 1) -- (site 2);
\draw[] (edge 2) -- (site 8);
\draw[] (edge 4) -- (site 4);
\draw[] (vertex 1) -- (site 13);
\draw[] (vertex 1) -- (site 12);
\draw[] (vertex 3) -- (site 7);
\draw[] (vertex 4) -- (site 5);
\draw[] (vertex 1) -- (edge 2);
\draw[] (vertex 1) -- (edge 1);
\draw[] (vertex 2) -- (edge 1);
\draw[] (vertex 2) -- (edge 3);
\draw[] (vertex 3) -- (edge 3);
\draw[] (vertex 3) -- (edge 2);
\draw[] (vertex 2) -- (edge 4);
\draw[] (vertex 4) -- (edge 4);
\draw[] (vertex 4) -- (edge 3);
\draw[red!100,dashed] (vertex 1) .. controls (0,-0.3) .. (vertex 2);
\draw[red!100,dashed] (vertex 1) .. controls (1.7,-2.9) .. (vertex 3);
\draw[red!100] (vertex 3) .. controls (-1.7,-2.9) .. (vertex 2); 
\draw[red!100,dashed] (vertex 4.east) .. controls +(right:12mm) and +(down:12mm) .. (vertex 2.south);
\draw[red!100] (vertex 4.north) .. controls +(up:12mm) and +(left:12mm) .. (vertex 2.west);
\draw[red!100] (vertex 2) .. controls (-0.2,0) .. (-0.2,0.9);
\draw[red!100] (vertex 2) .. controls (-3.8,-0.8) .. (-4.5,-0.3);
\draw[red!100] (vertex 4) .. controls (-1.7,-3.5) .. (vertex 3);
\draw[red!100,dashed] (vertex 4) -- (-4.5,-4);
\draw[red!100,dashed] (vertex 3) -- (-0.3,-5.2);
\draw[red!100,dashed] (vertex 4) .. controls (-4.3,-1.3) .. (-4.8,-0.6);
\draw[red!100] (vertex 1) .. controls (0.2,0) .. (0.2,0.9);
\draw[red!100] (vertex 1) -- (2.2,-0.3);
\draw[red!100,dashed] (vertex 1) -- (2.55,-0.4);
\draw[red!100,dashed] (vertex 1) -- (2.85,-0.8);
\draw[red!100,dashed] (vertex 1) -- (2.85,-1.3);
\draw[red!100] (vertex 1) .. controls (2.2,-2.8) .. (2.85,-3.4);
\draw[red!100] (vertex 3) .. controls (1.8,-3.6) .. (2.65,-4);
\draw[red!100] (vertex 3)-- (0.3,-5.2);
\draw[red!100] (vertex 4) -- (-4.8,-3.8);

\draw[cyan!100] (Vertex 2) -- (edge 1);
\draw[cyan!100] (Vertex 2) -- (edge 3);
\draw[cyan!100] (Vertex 2) -- (edge 2);
\draw[cyan!100] (Vertex 1) -- (edge 4);
\draw[cyan!100] (Vertex 1) -- (edge 3);
\draw[cyan!100] (-2,-4) -- (edge 3);
\draw[cyan!100] (-3.5,0) -- (edge 4);
\draw[cyan!100] (-5,-1.5) -- (edge 4);
\draw[cyan!100] (-0.7,0.4) -- (edge 1);
\draw[cyan!100] (0.7,0.4) -- (edge 1);
\node[fakenode]        (site 22)       [below=2.8cm of vertex 1]{};
\draw[cyan!100] (site 22) -- (edge 2);
\draw[cyan!100] (2.8,-2.6) -- (edge 2);

\end{tikzpicture}
\caption{The dual hypermap and Pradeep's curves}
\label{fig:8}

\end{figure}
There are a few differences in above conditions from those in \cite{Martin}, these small modifications is used to deal situations like \(i=\sigma(i)\). See Appendices B and C for more on these singular situations. Figure \ref{fig:8} shows the picture of both dual hypermap \(H^*\) (with darts drawn in sky-blue and vertex in purple.) and Pradeep's curves of $H$ (in red), based on the original hypermap of Figure \ref{fig:1}. Here, darts 3, 11, 6 are in the set \(S\) of special darts, we draw their corresponding curves as dashed line which means that they will finally be erased. We can immediately see from Figure \ref{fig:8} that the  hypergraph of the dual hypermap and the graph (1-skeleton) of Pradeep's surface code (including those erased curves) are dual to each other in the sense of cell-complex. 
\begin{theorem}
\label{Pro 1}
We can construct Pradeep's surface code of the hypermap \(H=(\Sigma,\Gamma)\) by dualizing the graph \(\Gamma^*\) of the dual hypermap \(H^*=(\Sigma_{op},\Gamma^*)\) and erase the curves that intersect the darts of \(H^*\) whose label i belongs to \(S\).
\end{theorem}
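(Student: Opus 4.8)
The plan is to exhibit, for each dart, an explicit drawing of Pradeep's curve that coincides with an edge of the Poincar\'e dual of the CW complex \(H^*\), and then to check that deleting the special ones reproduces Pradeep's complex verbatim, hence the same chain complex and the same code. First I would set up the dictionary between the cells of \(H^*\), viewed as a CW complex on \(\Sigma_{op}\) (equivalently on \(\Sigma\), since the orientation reversal plays no role in a \(\mathbb{Z}_2\)-coefficient chain complex and may be suppressed throughout), and the cells of its dual. By the hypermap duality the dual \(0\)-cells are the \(2\)-cells of \(H^*\), i.e.\ the vertices of \(H\); moreover each such \(2\)-cell of \(H^*\) already contains exactly one vertex of \(H\), so I can place each dual vertex at that very vertex of \(\Gamma\) --- precisely where Pradeep's curves must begin and end. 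The dual \(1\)-cells cross the darts of \(H^*\) one-for-one, and the dual \(2\)-cells sit at the sites of \(H^*\), namely the face-centres \(V_f\) inside each face \(f\) of \(H\) together with the edges \(e_{\owns i}\) of \(H\).

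The heart of the argument is to identify the endpoints of the dual edge that crosses the dart labelled \(i\) of \(H^*\), for which I would use the fifth and sixth bullets of the definition of \(H^*\). Since \(\alpha\) rotates around an edge, the dart \(\alpha^{-1}(i)\) of \(H\) is incident on the same edge \(e_{\owns i}\) and is the rotational predecessor of \(i\) there; the tangent condition then forces the dart \(i\) of \(H^*\) to leave \(e_{\owns i}\) into the angular sector bounded by the darts \(i\) and \(\alpha^{-1}(i)\) of \(H\). Consequently the two faces of \(H^*\) abutting the dart \(i\) of \(H^*\) are exactly the two dualizing \(v_{\owns i}\) and \(v_{\owns \alpha^{-1}(i)}\): the \(H^*\)-face surrounding \(v_{\owns i}\) is the one on the side carrying the dart \(i\) of \(H\), and the face surrounding \(v_{\owns \alpha^{-1}(i)}\) the one carrying \(\alpha^{-1}(i)\). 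Hence the dual edge across dart \(i\) of \(H^*\) joins \(v_{\owns i}\) to \(v_{\owns \alpha^{-1}(i)}\), which are precisely the endpoints prescribed for Pradeep's curve \(i\).

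Next I would verify that this dual edge can be realized as a legitimate Pradeep curve, i.e.\ that the three conditions of the construction in Section~2.3 can be met simultaneously over the whole surface. Because the dart \(i\) of \(H^*\) lies inside the face \(f_{\owns i}\) of \(H\) and both endpoints \(v_{\owns i}, v_{\owns \alpha^{-1}(i)}\) are corners of that face-polygon, the transversal crossing can be routed inside \(f_{\owns i}\), giving condition (b); it is not a single point except in the degenerate case \(i=\alpha^{-1}(i)\), which occurs only for a special dart that is erased anyway; and within each face all these crossing arcs can be drawn disjointly, since they cut the darts of \(H^*\) emanating from the single centre \(V_f\) near the boundary, giving condition (c). This is the very same non-intersecting ``flower'' picture already invoked at the end of Section~2.3, so the required simultaneous realizability is exactly the freedom noted there. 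I expect the main obstacle to lie in this step together with the preceding identification: one must check in the singular configurations (\(i=\alpha^{-1}(i)\), \(i=\sigma(i)\), or \(v_{\owns i}=v_{\owns\alpha^{-1}(i)}\) with \(i\neq\alpha^{-1}(i)\)) that the local angular argument still pins down the correct pair of dual faces and that the resulting curve, possibly a loop, remains admissible.

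Finally, with the drawing fixed, I would read off the two cell structures and conclude. The dual \(2\)-cell around a face-centre \(V_f\) is bounded by the dual edges crossing the darts \(i\) of \(H^*\) with \(f_{\owns i}=f\), which is exactly Pradeep's \(f^*\); and by the endpoint identification, deleting the dual edges whose labels lie in \(S\) is literally the erasure of the curves of the special darts. Thus the \(1\)-skeletons agree and the induced boundary maps \(\partial_2^*\), \(\partial_1^*\) coincide with those Pradeep obtains, so the two CSS check matrices, and therefore the two codes, are identical.
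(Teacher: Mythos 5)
Your proposal is correct, and it actually supplies an argument where the paper has none: Proposition \ref{Pro 1} is stated in the paper without proof, presented as an observation read off from Figure \ref{fig:8} together with the remark that the freedom one has in drawing Pradeep's curves can be exploited so that curve \(i\) crosses dart \(i\) of \(H^*\) exactly once. Your write-up is the completed form of exactly that observation. Its central step --- using the tangent-vector condition in the definition of \(H^*\) to pin down the two \(2\)-cells of \(H^*\) abutting the dart labelled \(i\), hence concluding that the dual edge across that dart joins \(v_{\owns{i}}\) to \(v_{\owns{\alpha^{-1}(i)}}\), the prescribed endpoints of Pradeep's curve \(i\) --- is essentially the argument the paper deploys only later, inside the proof of Proposition 3, where it is shown that \(v_{\owns{i}}\) is the unique vertex of \(H\) inside the \(2\)-cell \(c^*_i\) (and where the author remarks that this retroactively makes equation (6) rigorous); so the lemma you prove on the way to Proposition \ref{Pro 1} is one the paper only establishes afterwards. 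Two loose ends in your version, neither fatal. First, you assert that each \(2\)-cell of \(H^*\) contains exactly one vertex of \(H\) before justifying it: your angular argument gives existence (the cell corresponding to the orbit \(\langle\sigma\rangle\cdot i\) contains \(v_{\owns{i}}\)), and uniqueness then follows because every vertex of \(H\) equals \(v_{\owns{j}}\) for some dart \(j\) and distinct \(2\)-cells of \(H^*\) are disjoint. Second, you flag the singular configurations but leave them unchecked; they do close under the paper's own conventions --- a dart with \(i=\alpha^{-1}(i)\) is automatically special and its curve erased, while \(v_{\owns{i}}=v_{\owns{\alpha^{-1}(i)}}\) with \(i\neq\alpha^{-1}(i)\) produces precisely the admissible self-circle of Figure \ref{fig:sing}. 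One last nuance: before erasure the dual \(2\)-cell at the face-centre \(V_f\) is only the ``main disc'' of \(f\), not the full flower \(f^*_i\), which appears only after the special curves are deleted; but since the deleted dual edges are exactly the curves of the special darts, your comparison of \(1\)-skeletons after deletion still yields identical cell structures, hence identical boundary matrices and identical codes.
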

\begin{proof}
 For any $i\in{B_n}$ , we deform the piecewise smooth curve which consists of dart $i$ and dart $\alpha^{-1}(i)$ of $H$ to a smooth curve inside the face $f_{\owns{i}}$, with its end points $v_i$ and $v_{\alpha^{-1}(i)}$ fixed. Now, due to the fifth conditions above that $H^*$ should satisfy, there is one and only one intersection of the curve with the dart $i$ of $H^*$ at the early stage of the deformation, and we stop here, then we has obtained the desired Pradeep's curves.
\end{proof}
As permutarions on the common label set $B_n$, the combinatorial hypermap corresponds to \(H^*\) are easily seen to be \((\alpha',\sigma')=(\alpha^{-1},\alpha^{-1}\sigma)\) and we have \(((\alpha')',(\sigma')')=(\alpha,\sigma)\), which suggests that in topological hypermap level, we may also have: 
\begin{equation}
    (H^*)^*=H
\end{equation}
Actually, the equals sign in equation (4) are correct in the sense of strong isomorphism mentioned in \cite{Martin}, i.e., we say topological hypermap \(H=(\Sigma,\Gamma)\) and \(H'=(\Sigma',\Gamma')\) are strongly isomorphic, and write \(H=H'\), if there exists an orientation-preserving homeomorphism \(u:\Sigma\rightarrow\Sigma'\) with \(u|_\Gamma\) giving an equality of hypergraphs.\par
In particular, \(H\) itself is one of the hypermap duals of \(H^*\) that satisfy the conditions of dual hypermap (See the proof of proposition 3.). Therefore, Pardeep's surface code of the dual hypermap \(H^*\) with the same set of special darts \(S\) can also be constructed easily by dualizing the original graph \(\Gamma\) and erase those curves which crosses the special darts.

\subsection{Edge hypermap codes, $\triangle$-duals and contrary maps}
In the last section, we have related the hypermap codes of $H$ to the dual hypermap $H^*$ by giving a simple geometric relation between the Pradeep's surface code and the underline bipartite graph of $H^*$. However, it is still not clear weather there are simple relations between the codes themselves, i.e, between the pair of hypermap quantum codes $(\mathcal{C},\mathcal{C^*})$ from $(H,H^*)$. We try to answer the question in this section. 

Recall that for a topological hypermap, we have four vector spaces \(\mathcal{V}\), \(\mathcal{E}\), \(\mathcal{F}\), \(\mathcal{W}\), from which we have the chain complex \(\mathcal{F}\xrightarrow{d_2}\mathcal{W}\xrightarrow{d_1}\mathcal{V}\), further more, we constructed the subspace \(\iota(\mathcal{E})\), together with which we have the chain complex \(\mathcal{F}\xrightarrow{\partial_2}\mathcal{W}/\iota(\mathcal{E})\xrightarrow{\partial_1}\mathcal{V}\) by Figure \ref{jiaohuantu}. Then by choosing a special dart for each edge to form a basis \(\omega_i+\iota(\mathcal{E})\) for \(W/\iota(\mathcal{E})\) with \(i\in{B}\setminus{S}\), the hypermap homology code can be constructed. Now, in the first chain complex, we replace \(\mathcal{F}\) by \(\mathcal{E}\), and \(d_2\) by \(\iota:e\mapsto\sum_{i\in{e}} \omega_i\), but leave \(d_1\) unchanged, then we still have chain complex \(d_1\circ{\iota}=0\). Conversely, we replace \(\mathcal{E}\) by \(\mathcal{F}\), and also replace \(\iota\) by 
\(d_2\), which gives us the subspace \(d_2(\mathcal{F})\) and the quotient \(\mathcal{W}/d_2(\mathcal{F})\). Because \(d_1\circ{d_2}=0\), again the map \(\hat{\partial_1}:\omega_i+d_2(\mathcal{F})\mapsto{d_1\omega_i}\) is well defined and we have the diagram as is shown in 
\begin{wrapfigure}{r}{0.4\textwidth}
\centering
\begin{tikzpicture}[
squarednode/.style={rectangle, draw=black!0, fill=green!0, very thick, minimum size=2mm},
]
\node[squarednode]      (site 1)        {\(\mathcal{W}\)};
\node[squarednode]      (site 2)       [right=1.6 of site 1] {\(\mathcal{V}\)};
\node[squarednode]      (site 3)       [left=1.6 of site 1] {\(\mathcal{E}\)};
\node[squarednode]      (site 4)       [below=1 of site 1] {\(\mathcal{W}/d_2(\mathcal{F})\)};
\node[squarednode]      (fake l)       [left=0.3 of site 4]{};
\node[squarednode]      (fake r)       [right=0.3 of site 4]{};
\node[squarednode]      (p1)       [above=0.1 of fake l]{\(\hat{\partial_2}\)};
\draw[->] (site 3) -- (site 4);
\node[squarednode]      (p2)       [above=0.1 of fake r]{\(\hat{\partial_1}\)};
\draw[->] (site 4) -- (site 2);
\node[squarednode]      (fake l+)       [left=0.4 of site 1]{};
\node[squarednode]      (fake r+)       [right=0.4 of site 1]{};
\node[squarednode]      (d1)       [above=-0.2 of fake l+]{\(\iota\)};
\node[squarednode]      (d2)       [above=-0.2 of fake r+]{\(d_1\)};
\draw[->] (site 1) -- (site 2);
\draw[->] (site 3) -- (site 1);
\node[squarednode]      (center)       [below=0.3 of site 1]{};
\node[squarednode]      (p)       [right=-0.1 of center]{\(p\)};
\draw[->] (site 1) -- (site 4);
\end{tikzpicture}
\caption{Definition of \(\hat{\partial_1}\), \(\hat{\partial_2}\)}
\label{fig:9}
\end{wrapfigure}
Figure \ref{fig:9}. Simply speaking, we interchange \(\mathcal{E}\) and \(\mathcal{F}\), \(\iota\) and \(d_2\) in Figure \ref{jiaohuantu} to obtain a new chain complex \(\hat{\partial_1}\circ\hat{\partial_2}=0\). To construct stabilizer code, we choose one special dart for each face (A dart that belongs to this face) to obtain \(\hat{S}\), then a basis for \(\mathcal{W}/d_2(\mathcal{F})\) is \(\omega_i+d_2(\mathcal{F})\) with \(i\in{B\setminus{\hat{S}}}\). In this article we call homological code obtained this way the \emph{edge hypermap-homology quantum code} to distinguish it from the original one proposed by Martin which we call \emph{face hypermap homology quantum code} from now on. Pradeep's curves can be added under the same conditions of section 2.3, which again form equivalent surface codes.

Also, from a given topological hypermap \(H=(\Sigma,\Gamma)\), besides its dual, we can construct an another hypermap \(H^{\triangle}=(\Sigma_{op},\Gamma^\triangle)\) satisfying the following constraints:
\begin{itemize}
  \item \(\Sigma_{op}\) is the surface \(\Sigma\) with opposite orientation.
  \item The vertices of \(H^\triangle\) are the vertices of \(H\).
  \item There is precisely one edge of \(H^\triangle\) for each face of \(H\), inside that face.
  \item For each dart labeled \(i\) of \(H\), there is precisely one dart of \(H^\triangle\) that goes from \(v_{\owns{i}}\) to the edge of \(H^\triangle\) which lies inside the face \emph{\(f_{\owns{i}}\)} that \(i\) belongs to. We label this dart `\(i\)' again and all darts of \(H^\triangle\) are therefore labeled.
  \item The tangent vector of dart \(i\) of \(H^\triangle\) which starts at  \(v_{\owns{i}}\) must lie between tangent vectors of dart \(i\) and \(\sigma^{-1}(i)\) of \(H\) which start at the same vertex.
  \item The darts \(i\) of \(H^\triangle\) must lie within faces \(f_{\owns{i}}\) excepting the points \(v_{\owns{i}}\), and have no intersection or self-intersection inside the faces.
\end{itemize}
As with the dual hypermap \(H^*\), in the sense of strong isomorphism, \(H^\triangle\) is also unique, satisfying  \((H^\triangle)^\triangle=H\) and therefore could be regarded as another kind of dual hypermap. Further more, by definition of \(H^\triangle\), it's combinatorial hypermap is \((\hat{\alpha},\hat{\sigma})=(\sigma^{-1}\alpha,\sigma^{-1})\), which means that the faces of \(H^\triangle\) are the orbits of 
\begin{equation}
<\hat{\alpha}^{-1}\hat{\sigma}>=<(\sigma^{-1}\alpha)^{-1}\sigma^{-1}>=<\alpha^{-1}>
    \end{equation}
, i.e., the edges of \(H\), while the edges of \(H^\triangle\) are the orbits of
\begin{equation}
 <\hat{\alpha}^{-1}>=<\alpha^{-1}\sigma>
\end{equation}
, i.e., the faces of \(H\). Thus, when choosing a set of special darts for a face hypermap code of \(H\), their corresponding darts in \(\Gamma^\triangle\) form exactly a set of special darts for an edge hypermap code of \(H^\triangle\), and vice versa. With this set of common special darts, the face hypermap code $\mathcal{C}$ of \(H\) equals the edge hypermap code $\mathcal{C}^\triangle$ of \(H^\triangle\) due to equation (5) and (6). This equivalence can also be seen by the fact that there exists a common set of Pradeep's curves for these two codes:
\begin{theorem}
\label{Pro 2}
With a common set of special darts \(S=\hat{S}\), the topological hypermap \(H\) and \(H^\triangle\) can have a common set of Pradeep's curves. As a consequence, the face hypermap-homology quantum code of \(H\) equals the edge hypermap-homology quantum code of \(H^\triangle\) under \(S\).
\end{theorem}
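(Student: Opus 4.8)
The plan is to take the geometric route the statement suggests: exhibit a single family of Pradeep curves that simultaneously meets the conditions of Section 2.3 for the face code of \(H\) and the conditions of Section 3.2 (the three conditions plus the extra tangent condition) for the edge code of \(H^\triangle\), and then read off the equality of codes from the fact that both surface codes are then built on literally the same CW structure on \(\Sigma\). The purely algebraic identity of the two codes is already forced by equations (7), (8) together with the observation that the image of \(d_2\) for \(H^\triangle\) coincides with \(\iota(\mathcal{E})\) of \(H\) (the faces of \(H^\triangle\) being the edges of \(H\), so \(d_2^\triangle(f^\triangle)=\sum_{j\in e}\omega_j=\iota(e)\)); hence the real content is the \emph{existence of the common curve family}, and the code equality is then recovered geometrically as its consequence.

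First I would pin down, for each dart labelled \(i\), that the two prescribed curves share endpoints and ambient region. Using the left-to-right convention and \((\hat\alpha,\hat\sigma)=(\sigma^{-1}\alpha,\sigma^{-1})\), one computes \(\hat\alpha^{-1}(i)=\sigma\alpha^{-1}(i)\), which lies in the same \(\sigma\)-orbit as \(\alpha^{-1}(i)\); since also \(v_{\owns i}^\triangle=v_{\owns i}\), both curves must join \(v_{\owns i}\) to \(v_{\owns{\alpha^{-1}(i)}}\). For the region: in the edge code of \(H^\triangle\) the flower generated by the edge \(e_{\owns i}^\triangle\) is centred at the \(H^\triangle\)-edge-node sitting inside the face \(f_{\owns i}\) of \(H\), and the \(H^\triangle\)-darts incident on that node form the orbit \(\langle\hat\alpha\rangle\cdot i=\langle\alpha^{-1}\sigma\rangle\cdot i\), i.e.\ exactly the darts belonging to \(f_{\owns i}\); hence this flower lies inside \(f_{\owns i}\), the very face in which the face code of \(H\) asks curve \(i\) to lie.

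The main step, and the hard part, is to check that the extra tangent condition of the edge code of \(H^\triangle\) is compatible with lying inside \(f_{\owns i}\), so that one chord does both jobs. Since the face code imposes no tangent constraint, it suffices to draw curves obeying the edge-code constraints of \(H^\triangle\); these then automatically satisfy Section 2.3 for \(H\). At the initial vertex the condition asks the tangent to lie between the \(H^\triangle\)-darts \(i\) and \(\hat\sigma^{-1}(i)=\sigma(i)\); by the defining tangent property of \(H^\triangle\) (its fifth bullet), the \(H^\triangle\)-dart \(i\) already lies strictly inside the corner of \(f_{\owns i}\) bounded by the \(H\)-darts \(\sigma^{-1}(i)\) and \(i\), so the sub-sector between \(H^\triangle\)-dart \(i\) and \(H\)-dart \(i\) is nonempty, lies inside \(f_{\owns i}\), and still meets the edge-code range. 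At the terminal vertex the condition asks the tangent to lie between the \(H^\triangle\)-darts \(\hat\alpha^{-1}(i)=\sigma\alpha^{-1}(i)\) and \(\hat\alpha^{-1}\hat\sigma(i)=\alpha^{-1}(i)\); here one uses \(\sigma\alpha^{-1}(i)=\alpha^{-1}\sigma(i)\), the dart following \(i\) inside the face \(f_{\owns i}\), so the corresponding \(H^\triangle\)-dart points into \(f_{\owns i}\) and the same kind of nonempty sub-sector appears. I would then draw the chords of each face as a non-crossing cycle through its consecutive polygon corners \(v_{\owns{i_k}}\to v_{\owns{\alpha^{-1}(i_k)}}=v_{\owns{i_{k+1}}}\), exactly as in Pradeep's construction, keeping the family intersection-free.

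Finally I would assemble the codes. With the common special set \(S=\hat S\) (one chosen dart per edge of \(H\), equivalently per face of \(H^\triangle\)), erasing the special-dart curves deletes the same arcs for both codes, so the resulting vertices, \(1\)-skeletons and \(2\)-cells coincide; the flowers even have identical main discs and petals, since the special-dart expansion runs over the common orbit \(\langle\alpha^{-1}\rangle\cdot i\) in both cases. The boundary matrices therefore agree, giving \(\hat\partial_2(e_{\owns i}^\triangle)=\partial_2(f_{\owns i})\) and \(\hat\partial_1=\partial_1\) on the shared basis, whence the two homological quantum codes are equal. The principal obstacle I anticipate is the angular bookkeeping of the third paragraph, namely verifying the interleaving of the \(H^\triangle\)-darts with the face corners of \(H\) at both endpoints, together with the singular configurations flagged earlier (the self-loop case \(i=\sigma(i)\) and the \(d_1\omega_i=d_1\omega_j\) ambiguity of Figure \ref{fig:7}), which I must check do not obstruct the simultaneous choice of admissible tangents.
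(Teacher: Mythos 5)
Your overall strategy (exhibit one family of curves admissible for both codes, then read off equality of the resulting CW structures) is the same as the paper's, your endpoint identification is correct, and your opening algebraic remark does suffice for the second sentence of the proposition (the paper itself makes that observation just before stating it). But the central geometric step has a genuine gap. You assert that ``it suffices to draw curves obeying the edge-code constraints of \(H^\triangle\); these then automatically satisfy Section 2.3 for \(H\),'' having argued that the edge-code constraint region for curve \(i\) is the face \(f_{\owns{i}}\). It is not. Section 2.3 applied to \(H^\triangle\) requires curve \(i\) to lie inside the 2-cell of the \emph{hypermap} \(H^\triangle\) to which the dart \(\Omega_i\) belongs, and by equation (7) the faces of \(H^\triangle\) are the orbits of \(\langle\alpha^{-1}\rangle\), i.e.\ they correspond to the \emph{edges} of \(H\): geometrically this 2-cell is the component of \(\Sigma\setminus\Gamma^\triangle\) containing \(e_{\owns{i}}\), a neighbourhood of that edge which meets every face of \(H\) incident on \(e_{\owns{i}}\). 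You have conflated it with the flower of the \emph{derived surface code} generated by the edge \(E_{\owns{i}}\) of \(H^\triangle\) (which does correspond to \(f_{\owns{i}}\), but is an output of the construction, not the region constraining an individual curve). Consequently neither set of constraints implies the other: a curve legal for the edge code of \(H^\triangle\) may cross \(\omega_i\) and leave \(f_{\owns{i}}\), since the darts of \(H\) are interior to the 2-cells of \(H^\triangle\); and conversely a chord of \(f_{\owns{i}}\) drawn ``as in Pradeep's construction'' may cross the spokes \(\Omega_j\) (\(j\in f_{\owns{i}}\)) that subdivide \(f_{\owns{i}}\), which the edge code of \(H^\triangle\) forbids. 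Your final construction never excludes either failure; the tangent analysis only controls the two endpoints, not the body of the curve.

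What is missing is exactly the device the paper's proof is built on: for each \(i\), the darts \(\omega_i\), \(\omega_{\alpha^{-1}(i)}\) of \(H\) together with \(\Omega_i\), \(\Omega_{\hat{\alpha}^{-1}(i)}\) of \(H^\triangle\) bound a (possibly degenerate) tetragon with corners \(v_{\owns{i}}\), \(e_{\owns{i}}\), \(v_{\owns{\alpha^{-1}(i)}}\), \(E_{\owns{i}}\), whose interior avoids both \(\Gamma\) and \(\Gamma^\triangle\) and hence lies in the \emph{intersection} of the two constraint regions. Any curve joining \(v_{\owns{i}}\) to \(v_{\owns{\alpha^{-1}(i)}}\) through the interior of this tetragon satisfies the face-code conditions for \(H\), the edge-code region condition for \(H^\triangle\), and the extra tangent condition (your angular bookkeeping then becomes automatic, since the tetragon's corner at \(v_{\owns{i}}\) is the sub-sector between \(\Omega_i\) and \(\omega_i\)); moreover curves placed in distinct tetragons can never intersect, so the whole family is admissible simultaneously, and for \(i\in S\) one simply draws nothing in that tetragon. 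Your argument becomes correct once ``inside \(f_{\owns{i}}\)'' is replaced by ``inside this tetragon'' as the prescription for where each chord is drawn; without that restriction the existence of a common family of Pradeep's curves --- the actual content of the first sentence of the proposition --- has not been established.
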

\begin{proof}
For a dart \(\omega_i\) of \(H\), there is an edge $E_{\owns{i}}$ of $H^\triangle$ lies inside the face $f_{\owns{i}}$ to which $\omega_i$ belongs, also, there is a dart $\Omega_i$ of $H^\triangle$ which coresspond to $i$ and connects $E_{\owns{i}}$ and $v_{\owns{i}}$. Because $\omega_i$ and $\Omega_i$ have a common vertex that can be denoted by $v_{\owns{i}}$ and we have $v_{\owns{\alpha^{-1}(i)}}=v_{\owns{\hat{\alpha}^{-1}\hat{\sigma}(i)}}=v_{\owns{\hat{\alpha}^{-1}(i)}}$, we conclude that the vertices of \(\alpha^{-1}\omega_i\) and $\hat{\alpha}^{-1}\Omega_i$ are equal. Therefore,
the darts $\omega_i$, $\alpha^{-1}\omega_i$, $\Omega_i$, $\hat{\alpha}^{-1}\Omega_i$ enclose a tetragon (might be a degenerated one) whose interior is contained in $f_{\owns{i}}$ and any curve on $\Sigma$ which connects the points \(v_{\owns{i}}\) and $v_{\owns{\alpha^{-1}(i)}}$ and lies totally in the interior of the tetragon excepting the ends must satisfies the constraints of the Pradeep's curve for both \(H\) and $H^\triangle$. However, when \(i\) is in $S$, we do not draw any curve inside the tetragon.
\end{proof}

Next, we will construct both $H^*$ and $H^\triangle$ simultaneously by forcing the edges of $H^\triangle$ coinside with the vertices of $H^*$ at all faces of \(H\), and avoiding any intersection of the the darts of both $H^*$ and $H^\triangle$ aside from their common ends. Remember that for \(H\), we write the label of a dart inside the face that counterclockwise incident to the dart with respect to its edge, in this article, this rule should also be applied to $H^*$ and $H^\triangle$. However, because the orientations of both $H^*$ and $H^\triangle$ are opposite to that of \(H\), we can meet this rule for all the three hypermaps by writing a single \(i\) inside the triangle that enclosed by darts $\omega_i$, $\omega_i^*$ (this means the dart of \(H^*\) that corresponds to $i$.) and $\Omega_i$, which is shown in Figure \ref{fig:10}. This triangle has been mentioned in literature  (see \cite{Triangle} or Definition 4.7 in \cite{Martin}) without explicitly defining \(H^\triangle\).

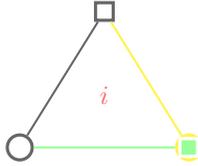
\begin{figure}[ht]
\centering
\begin{tikzpicture}[
squarednode/.style={rectangle, draw=black!60, fill=cyan!0, very thick, minimum size=2mm},
roundnode/.style={circle, draw=yellow!80, fill=green!0, very thick, minimum size=2mm},
fakenode/.style={rectangle, very thick, minimum size=2mm},
capnode1/.style={rectangle, draw=orange!0, fill=green!40, very thick, minimum size=2mm},
capnode2/.style={circle, draw=black!60, fill=purple!0, very thick, minimum size=2mm},
capnode3/.style={rectangle, draw=blue!0, fill=blue!0, very thick, minimum size=2mm},
]
\node[squarednode]      (edge 1)       [] {};
\node[fakenode]      (site 1)       [below=1.532 of edge 1] {};
\node[roundnode]      (vertex 2)       [right=0.8 of site 1] {};
\node[capnode2]      (vertex 1)       [left=0.8 of site 1] {};
\node[capnode1]      (edge 2)       [right=0.845 of site 1] {};
\draw[thick, black!60] (edge 1) -- (vertex 1);
\draw[thick, yellow!80] (edge 1) -- (vertex 2);
\draw[thick, green!40] (vertex 2) -- (vertex 1);
\node[fakenode]      (site 2)       [above=0.3 of site 1] {\textcolor{red!60}{\( i\)}};
\node[fakenode]      (site 3)       [left=0.5 of vertex 1] {};

\end{tikzpicture}
\caption{Darts $\omega_i$, $\omega^*_i$, $\Omega_i$ with their colors black, orange and green.} 
\label{fig:10}
\end{figure}

Finally, let's define the \emph{contrary map} \(C\) of a topological hypermap $H=(\Sigma,\Gamma)$ as \(C=(\Sigma_{op}, \overline{\Gamma})\),  where $\overline{\Gamma}$ represents the hypergraph obtained by interchanging the edges and vertices of \(\Gamma\). Thus, if we interchange the vertices and edges of the  hypergraph of \(H^\triangle\), and turn the normal field back, we get the contrary map \(H^\nabla\). Fortunately, this time, Figure \ref{fig:10} is still the correct way to set the label. The combinatorial hypermap of $H^\nabla$ is simply
\begin{equation}
    (\overline{\alpha},\overline{\sigma})=(\hat{\sigma},\hat{\alpha})
\end{equation}
by definition, thus we have
\begin{equation}
    <\overline{\alpha}^{-1}>=<\hat{\sigma}^{-1}>=<\sigma>=<\alpha'^{-1}\sigma'>,
\end{equation}
and
\begin{equation}
    <\overline{\alpha}^{-1}\overline{\sigma}>=<\hat{\sigma}^{-1}\hat{\alpha}>=<\alpha>=<\alpha'^{-1}>.
\end{equation}

Equations (8) and (9) show that the faces and edges of \(H^*\) and \(H^\nabla\) are interchanged, which we were already familiar with in the discussion of \(H\) and \(H^\triangle\). Actually, we have
\begin{theorem}
The contary map \(H^\nabla\) is a \(\triangle\)-dual of the dual hypermap $H^*$, i.e, $H^\nabla=(H^*)^\triangle$.
\end{theorem}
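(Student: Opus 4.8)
The plan is to reduce the asserted strong isomorphism $H^\nabla=(H^*)^\triangle$ to two independent checks: an algebraic one at the level of combinatorial hypermaps, and a geometric one tracking orientation together with the defining incidence data. First I would compute the combinatorial hypermap of $(H^*)^\triangle$ by composing the two substitution rules already recorded in the excerpt. The dual hypermap $H^*$ has combinatorial pair $(\alpha',\sigma')=(\alpha^{-1},\alpha^{-1}\sigma)$, and the $\triangle$-dual construction sends any $(\alpha,\sigma)$ to $(\sigma^{-1}\alpha,\sigma^{-1})$; applying the latter rule to $(\alpha',\sigma')$ gives
\[
(H^*)^\triangle \ \longleftrightarrow\ \bigl((\sigma')^{-1}\alpha',\,(\sigma')^{-1}\bigr)=\bigl((\alpha^{-1}\sigma)^{-1}\alpha^{-1},\,(\alpha^{-1}\sigma)^{-1}\bigr)=(\sigma^{-1},\,\sigma^{-1}\alpha).
\]
On the other side, the combinatorial pair of $H^\nabla$ recorded just above the proposition is $(\overline{\alpha},\overline{\sigma})=(\hat\sigma,\hat\alpha)=(\sigma^{-1},\sigma^{-1}\alpha)$. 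The two pairs coincide, so $H^\nabla$ and $(H^*)^\triangle$ carry the same combinatorial hypermap.

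Next I would upgrade this equality of permutation pairs to an equality of \emph{oriented} topological hypermaps. By the mutual reversibility of the combinatorial-to-topological correspondence established in Section 2.2, a combinatorial hypermap determines its oriented realization up to strong isomorphism, so it suffices to confirm that both sides are read off on the same oriented surface with the same conventions. I would track the normal-field reversals explicitly: $H^*$ lives on $\Sigma_{op}$, and the $\triangle$-dual flips the normal field again, placing $(H^*)^\triangle$ on $(\Sigma_{op})_{op}=\Sigma$; dually, $H^\triangle$ lives on $\Sigma_{op}$, and forming the contrary map turns the normal field back, so $H^\nabla$ also lives on $\Sigma$. With both hypermaps on $\Sigma$ and the permutations $\alpha,\sigma$ read against the same normal field (clockwise around edges, counterclockwise around vertices), the coincidence of combinatorial pairs yields the required orientation-preserving identification.

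A more transparent, self-contained route, which also exhibits the geometry, is to verify directly that $H^\nabla$ satisfies the six defining conditions of the $\triangle$-dual of $H^*$. The key device is the simultaneous construction of Figure \ref{fig:10}: inside each face $f_{\owns{i}}$ of $H$ the edge of $H^\triangle$ is placed exactly at the vertex of $H^*$, and the three darts $\omega_i,\omega_i^*,\Omega_i$ bound a common triangle carrying the single label $i$. Forming $H^\nabla$ from $H^\triangle$ interchanges vertices with edges and reverses the normal field, so the edges of $H^\triangle$ (the in-face points shared with $H^*$) become the vertices of $H^\nabla$, which are precisely the vertices of $H^*$; meanwhile the vertices of $H^\triangle$ (the vertices of $H$) become the edges of $H^\nabla$, one sitting inside each face of $H^*$, as the $\triangle$-dual of $H^*$ demands, since the faces of $H^*$ are the orbits of $\langle\sigma\rangle$, i.e.\ the vertices of $H$. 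The dart $\Omega_i$ of $H^\triangle$ then becomes the dart $i$ of $H^\nabla$ running from the edge $E_{\owns{i}}$ to the vertex seated at $v^*_{\owns{i}}$, matching the dart-correspondence condition for a $\triangle$-dual of $H^*$.

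I expect the main obstacle to be the two betweenness (tangent-direction) conditions. After the vertex/edge swap and the normal-field reversal I must show that the tangent of dart $i$ of $H^\nabla$, now emanating from the vertex at $v^*_{\owns{i}}$, lies between the tangents of darts $i$ and $(\sigma^*)^{-1}(i)=(\alpha^{-1}\sigma)^{-1}(i)$ of $H^*$, and that it acquires no spurious intersections inside the faces of $H^*$. Reversing the normal field inverts every local cyclic order, so each counterclockwise betweenness relation valid for $H^\triangle$ must be converted with care into the corresponding relation for $H^*$, and the delicate point is to confirm that these inequalities survive the orientation flip rather than getting inverted into a contradiction. The shared triangle of Figure \ref{fig:10}, whose interior lies in $f_{\owns{i}}$ and whose sides are $\omega_i,\omega_i^*,\Omega_i$, is exactly the bookkeeping tool that forces these relations, since any admissible arc drawn inside it for one of the three hypermaps is admissible for the others. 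I would check the inequalities in the local picture at a single edge of $H^*$ (equivalently a single face of $H$) and then invoke the uniqueness of the $\triangle$-dual up to strong isomorphism to conclude globally.
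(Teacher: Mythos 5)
Your combinatorial computation is correct and matches what the paper itself records in equations (9)--(11): both \(H^\nabla\) and \((H^*)^\triangle\) carry the pair \((\sigma^{-1},\sigma^{-1}\alpha)\). But your first route---combinatorial equality plus orientation bookkeeping plus uniqueness of the topological realization---only proves that \(H^\nabla\) and \((H^*)^\triangle\) are \emph{strongly isomorphic as abstract hypermaps}, and that is strictly weaker than the proposition. ``\(H^\nabla\) is a \(\triangle\)-dual of \(H^*\)'' is a statement about how the concrete hypergraph \(\overline{\Gamma^\triangle}\) sits on \(\Sigma\) relative to the cell structure of \(H^*\): its edges must lie inside the 2-cells of \(H^*\), exactly one per cell, and its darts must lie inside those cells with the prescribed endpoints and tangency. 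A strong isomorphism is a self-homeomorphism of \(\Sigma\) carrying \(H^\nabla\) onto some \(\triangle\)-dual of \(H^*\), but it need not fix \(\Gamma^*\), so it cannot deliver these incidence relations. They are exactly what the paper needs afterwards, when it identifies \(\Omega^*_i\) as the darts of \((H^*)^\triangle\) corresponding to the special darts \(\omega^*_i\) and applies Proposition 2 to the pair \((H^*,H^\nabla)\) to produce a common set of Pradeep's curves.

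Your second, direct route is the right strategy (it is the paper's), but it has a genuine gap at its central step: you assert that the edges of \(H^\nabla\) (the vertices of \(H\)) sit ``one inside each face of \(H^*\)'' and justify this by the combinatorial fact that faces of \(H^*\) are the orbits of \(\langle\sigma\rangle\), i.e.\ correspond to vertices of \(H\). That bijection of labels does not prove the geometric containment: a priori a 2-cell \(c^*_i\) of \(H^*\) could contain two vertices of \(H\), or none, or the wrong one. Establishing this containment and its uniqueness is the bulk of the paper's proof: first, the betweenness conditions in the definition of \(H^*\) show that \(\omega_i\) emanates from \(e_{\owns{i}}\) between the inner darts \(\omega^*_i\) and \(\omega^*_{\alpha'^{-1}(i)}\) of \(c^*_i\), hence \(v_{\owns{i}}\) lies inside \(c^*_i\) (using that \(\Gamma\) and \(\Gamma^*\) meet only at edges); second, an orbit argument shows it is the \emph{only} vertex of \(H\) there---any vertex \(v\) of \(H\) inside \(c^*_i\) is joined by some dart \(\omega_\epsilon\) of \(H\), lying inside \(c^*_i\), to an edge on the boundary of \(c^*_i\), and since \(\omega^*_\epsilon\) is the first dart of \(H^*\) met when \(\omega_\epsilon\) is turned counterclockwise around \(e_{\owns{\epsilon}}\), the dart \(\omega^*_\epsilon\) must be an inner dart of \(c^*_i\), forcing \(\epsilon=\sigma^k(i)\) and \(v=v_{\owns{i}}\). (This is also what makes equation (6) rigorous, as the paper remarks.) Without this step your verification of the third defining condition of a \(\triangle\)-dual of \(H^*\) is missing, and ``invoking the uniqueness of the \(\triangle\)-dual'' cannot close the hole, since uniqueness only compares objects already known to be \(\triangle\)-duals.
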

\begin{proof}
 The darts of all the four hypermaps $(H,H^*,H^\triangle,H^\nabla)$ are  naturally related by the same label $i\in{B}$. We denote the 4 darts labelel $i$ by $(\omega_i,\omega_i^*,\Omega_i,\Omega^*_i)$ with $\Omega_i=\Omega^*_i$ due to the definition of contrary map and \(\omega_i^*\), \(\Omega_i.\) are defined directly from $\omega_i$. Now, we fix an dart \(\omega^*_i\). The face of \(H^*\) to which \(\omega^*_i\) belongs is the orbit $<\alpha'^{-1}\sigma'>\cdot\omega^*_i$, we denote it $c^*_i$. The dart of $H^*$ which also belongs to the closure of $c^*_i$ and share the same edge with $\omega_i^*$ is $\alpha'^{-1}\omega_i^*=\omega^*_{\alpha'^{-1}(i)}=\omega^*_{\alpha(i)}$. By definition of dual hypermap, the tangent vector of $\omega_i^*$ at its edges $e_{\owns{i}}$ must lie between dart $\omega_i$ and dart $\omega_{\alpha^{-1}(i)}$ of $H$, the tangent vector of $\omega^*_{\alpha(i)}$ at its edges $e_{\owns{\alpha(i)}}=e_{\owns{i}}$ must lie between dart $\omega_{\alpha(i)}$ and dart $\omega_{\alpha\alpha^{-1}(i)}=\omega_i$ of $H$, geometrically, this indicates that the tangent vector of $\omega_i$ at its edges $e_{\owns{i}}$  lies between dart $\omega^*_i$ and dart $\omega^*_{\alpha(i)}=\omega^*_{\alpha'^{-1}(i)}$ of $H^*$ and further more, that the vertex $v_{\owns{i}}$ of \(\omega_i\) lies inside $c^*_i$ because $\Gamma$ and $\Gamma^*$ do not intersect beyond their common edges. Since edges of $H$ must lie on $\Gamma^*$ and therefore can not be inside $c^*_i$,  then, for each vertex $v$ of $H$ inside $c^*_i$, there is an dart $\omega$ of $H$ that connects $v$ and an edge $e$ of $H$ on the boundary of $c^*_i$. Notice that for each dart \(\omega_\epsilon\) of $H$, its corresponding dart \(\omega^*_\epsilon\)  in $H^*$ is the first dart of $H^*$ that \(\omega_\epsilon\)  will meet when it is turned counterclockwise around \(e_{\owns{\epsilon}}\) (At least in the vicinity of \(e_{\owns{\epsilon}}\).), there must be an inner dart of $c^*_i$ that correspond to $\omega$, say $(\alpha'^{-1}\sigma')^k(\omega^*_i)$, we have \(\omega=\omega_{(\alpha'^{-1}\sigma')^k(i)}=\omega_{\sigma^k(i)}\), and therefore \(v=v_{\owns{i}}\), which tells us that $v_{\owns{i}}$ is the only vertex of $H$ that lies inside $c^*_i$ (Here, we have actually given a more rigorous proof of equation (4)). Now, because the vertices of $H$ are exactly the edges of $H^\nabla$, we have that for each face of $H^*$, there is only one edge of $H^\nabla$ inside. By definition of $H^\triangle$, there is a dart \(\Omega_i=\Omega^*_i\) that connects $v_{\owns{i}}$ and the vertex $v^*_{\owns{i}}$ of \(\omega^*_i\)  (see Figure \ref{fig:10}). More over, the tangent vector of \(\Omega^*_i\) at \(v^*_{\owns{i}}\)  lies between $\omega_i^*$ and $\omega^*_{\sigma'^{-1}(i)}$. To see this, notice that the dart $\omega^*_{\sigma'^{-1}(i)}$ corresponds the dart $\omega_{(\alpha^{-1}\sigma)^{-1}(i)}$ of \(H\) whose edge lies counterclockwise to $v_{\owns{i}}$ at the boundary of the face $c_i$ that \(\omega_i\) belongs to, while the edge of the dart $\omega^*_i$ is the edge of dart \(\omega_i\) and lie clockwise to $v_{\owns{i}}$ at the same boundary. Finally, with the fact that $H^\nabla$ has opposite orientation to $H^*$, we have proved our result. 
\end{proof}
When choosing a set of special darts $\omega^*_i$ with  \(i\in S'\subset{B}\) for \(H^*\), the darts $\Omega^*_i$ ($i\in S'$) are those which correspond to $\omega^*_i$ in the sense of $\triangle$-dual by the proof of proposition 3. Now, by proposition 2 and 3, we have that the edge hypermap code $\mathcal{C}^\nabla$ of $H^\nabla$ under \(S'\) equals the face hypermap code $\mathcal{C}^*$ of \(H^*\) under \(S'\) (This can also be proved directly using equation (8) and (9).), and furthermore, they have a common surface code of Pradeep's.\par

\begin{figure}[ht]
\centering
\begin{tikzpicture}[
roundnode/.style={circle, draw=green!60, fill=green!5, very thick, minimum size=2mm},
squarednode/.style={rectangle, draw=red!60, fill=red!5, very thick, minimum size=2mm},
fakenode/.style={rectangle, draw=orange!0, fill=blue!0, very thick, minimum size=2mm},
Roundnode/.style={circle, draw=purple!30, fill=purple!10, very thick, minimum size=1mm},
Fakenode/.style={circle, draw=orange!0, fill=orange!0, very thick, minimum size=2mm},
]
\node[squarednode]      (edge 1){} ;
\node[fakenode]        (site 1)       [below=0.8cm of edge 1]{};
\node[roundnode]        (vertex 3)       [below=2.8cm of site 1]{};
\node[roundnode]      (vertex 2)       [left=1.532cm of site 1] {};
\node[roundnode]      (vertex 1)       [right=1.532cm of site 1] {};
\node[squarednode]      (edge 3)       [below=1.8cm of vertex 2] {};
\node[squarednode]      (edge 2)       [below=1.8cm of vertex 1] {};
\node[roundnode]      (vertex 4)       [left=1.8cm of edge 3] {};
\node[squarednode]      (edge 4)       [left=1.8cm of vertex 2] {};
\node[fakenode]        (site 3)       [above=0.507cm of edge 4]{};
\node[fakenode]        (site 4)       [left=0.507cm of site 3 ]{};
\node[fakenode]        (site 6)       [below=0.507cm of vertex 4]{};
\node[fakenode]        (site 5)       [left=0.507cm of site 6]{};
\node[fakenode]        (site 2)       [above=0.8cm of edge 1]{};
\node[fakenode]        (site 7)       [below=0.8cm of vertex 3]{};
\node[fakenode]        (site 11)       [right=0.3cm of vertex 1]{};
\node[fakenode]        (site 12)       [right=0.8cm of vertex 1]{};
\node[fakenode]        (site 13)       [above=0.666cm of site 11]{};
\node[fakenode]        (site 9)       [right=0.666cm of edge 2]{};
\node[fakenode]        (site 8)       [below=0.3cm of site 9]{};
\node[fakenode]        (site 10)       [above=0.1cm of site 1]{};
\node[fakenode]        (site 14)       [right=0.8cm of vertex 2]{3};
\node[fakenode]        (site 15)       [below=2.27cm of site 14]{};
\node[fakenode]        (site 16)       [below=0.8cm of vertex 2]{};
\node[fakenode]        (site 17)       [right=2.9cm of site 16]{};
\node[fakenode]        (site 18)       [left=3.1cm of site 17]{6};
\node[fakenode]        (site 19)       [left=1.15cm of site 18]{5};
\node[fakenode]        (site 20)       [left=1.7cm of site 15]{4};
\node[Roundnode]        (Vertex 1)       [right=0.4cm of site 19]{};
\node[Roundnode]        (Vertex 2)       [left=1cm of site 17]{};

\draw[] (edge 1) -- (site 2);
\draw[] (edge 2) -- (site 8);
\draw[] (edge 4) -- (site 4);
\draw[] (vertex 1) -- (site 13);
\draw[] (vertex 1) -- (site 12);
\draw[] (vertex 3) -- (site 7);
\draw[] (vertex 4) -- (site 5);
\draw[] (vertex 1) -- (edge 2);
\draw[] (vertex 1) -- (edge 1);
\draw[] (vertex 2) -- (edge 1);
\draw[] (vertex 2) -- (edge 3);
\draw[] (vertex 3) -- (edge 3);
\draw[] (vertex 3) -- (edge 2);
\draw[] (vertex 2) -- (edge 4);
\draw[] (vertex 4) -- (edge 4);
\draw[] (vertex 4) -- (edge 3);
\draw[red!100,dashed] (vertex 1) .. controls (0,-0.3) .. (vertex 2);
\draw[red!100,dashed] (vertex 1) .. controls (1.7,-2.9) .. (vertex 3);
\draw[red!100,dashed] (vertex 4.east) .. controls +(right:12mm) and +(down:12mm) .. (vertex 2.south);
\draw[red!100] (vertex 4.north) .. controls +(up:12mm) and +(left:12mm) .. (vertex 2.west);
\draw[red!100] (vertex 2) .. controls (-0.2,0) .. (-0.2,0.9);
\draw[red!100] (vertex 2) .. controls (-3.8,-0.8) .. (-4.5,-0.3);
\draw[red!100] (vertex 4) .. controls (-1.7,-3.5) .. (vertex 3);
\draw[red!100,dashed] (vertex 4) -- (-4.5,-4);
\draw[red!100,dashed] (vertex 3) -- (-0.3,-5.2);
\draw[red!100,dashed] (vertex 4) .. controls (-4.3,-1.3) .. (-4.8,-0.6);
\draw[red!100] (vertex 1) .. controls (0.2,0) .. (0.2,0.9);
\draw[red!100] (vertex 1) -- (2.2,-0.3);
\draw[red!100,dashed] (vertex 1) -- (2.55,-0.4);
\draw[red!100,dashed] (vertex 1) -- (2.85,-0.8);
\draw[red!100,dashed] (vertex 1) -- (2.85,-1.3);
\draw[red!100] (vertex 1) .. controls (2.2,-2.8) .. (2.85,-3.4);
\draw[red!100] (vertex 3) .. controls (1.8,-3.6) .. (2.65,-4);
\draw[red!100] (vertex 3)-- (0.3,-5.2);
\draw[red!100] (vertex 4) -- (-4.8,-3.8);

\draw[cyan!100] (Vertex 2) -- (edge 1);
\draw[cyan!100] (Vertex 2) -- (edge 3);
\draw[cyan!100] (Vertex 2) -- (edge 2);
\draw[cyan!100] (Vertex 1) -- (edge 4);
\draw[cyan!100] (Vertex 1) -- (edge 3);
\draw[cyan!100] (-2,-4) -- (edge 3);
\draw[cyan!100] (-3.5,0) -- (edge 4);
\draw[cyan!100] (-5,-1.5) -- (edge 4);
\draw[cyan!100] (-0.7,0.4) -- (edge 1);
\draw[cyan!100] (0.7,0.4) -- (edge 1);
\node[fakenode]        (site 22)       [below=2.8cm of vertex 1]{};
\draw[cyan!100] (site 22) -- (edge 2);
\draw[cyan!100] (2.8,-2.6) -- (edge 2);
\draw[green!30,thick] (Vertex 1) -- (vertex 4);
\draw[green!30,thick] (Vertex 1) -- (vertex 2);
\draw[green!30,thick] (Vertex 2) -- (vertex 1);
\draw[green!30,thick] (Vertex 2) -- (vertex 2);
\draw[green!30,thick] (Vertex 2) -- (vertex 3);
\node[Fakenode]        (dot 1)       [above=0.8cm of vertex 2]{};
\node[Fakenode]        (dot 2)       [below right=0.5cm of vertex 3]{};
\node[Fakenode]        (dot 3)       [below left=0.5cm of vertex 3]{};
\node[Fakenode]        (dot 4)       [below right=0.5cm and 0.2cm of vertex 4]{};
\node[Fakenode]        (dot 5)       [above left=0.05cm and 0.5cm of vertex 4]{};
\node[Fakenode]        (dot 6)       [above left=0.5cm and -0.2cm of vertex 1]{};
\node[Fakenode]        (dot 7)       [below right=0.2cm and 0.5cm of vertex 1]{};
\node[Fakenode]        (dot 8)       [right=0.6cm of Vertex 2]{11};
\node[Fakenode]        (dot 9)       [right=0.7cm of edge 3]{7};
\draw[green!30,thick] (dot 1) -- (vertex 2);
\draw[green!30,thick] (dot 6) -- (vertex 1);
\draw[green!30,thick] (dot 7) -- (vertex 1);
\draw[green!30,thick] (dot 3) -- (vertex 3);
\draw[green!30,thick] (dot 2) -- (vertex 3);
\draw[green!30,thick] (dot 5) -- (vertex 4);
\draw[green!30,thick] (dot 4) -- (vertex 4);
\draw[red!100] (vertex 3) .. controls (-1.7,-2.9) .. (vertex 2);
\draw[green!30,thick] (3,-0.5) -- (vertex 1);
\draw[blue!100] (Vertex 1) .. controls (-3.8,-1) .. (-3.4,0);
\draw[blue!100,dashed] (-3.8,0) .. controls (-4.1,-0.9) .. (-5,-1.3);
\draw[blue!100] (Vertex 1) .. controls (-4,-1.3) .. (-5,-1.7);
\draw[blue!100,dashed] (Vertex 2) .. controls (-1.8,-3) .. (Vertex 1);
\draw[blue!100] (Vertex 1) .. controls (-2.2,-3.2) .. (-2.2,-3.9);
\draw[blue!100] (Vertex 2) .. controls (-1.7,-3.3) .. (-1.8,-4);
\draw[blue!100,dashed] (Vertex 2) .. controls (-0.2,-0.2) .. (-0.8,0.2);
\draw[blue!100] (Vertex 2) .. controls (0.2,-0.2) .. (0.8,0.2);
\draw[blue!100] (-0.5,0.7) .. controls (0,0.5) .. (0.5,0.7);
\draw[blue!100,dashed] (Vertex 2) .. controls (1.9,-2.7) .. (2.7,-2.3);
\draw[blue!100] (Vertex 2) .. controls (1.6,-3.4) .. (1.7,-4);
\draw[blue!100] (2.9,-2.9) .. controls (2.3,-3.4) .. (2.2,-4);

\end{tikzpicture}
\caption{\(H,H^*,H^\triangle,H^\nabla\) and their surface codes. The darts of $H^{\triangle(\nabla)}$ are represented by the green lines, the Pradeep's curves of \(\mathcal{C}^{*(\nabla)}\) are represented by the blue curves.}
\label{fig:11}
\end{figure}
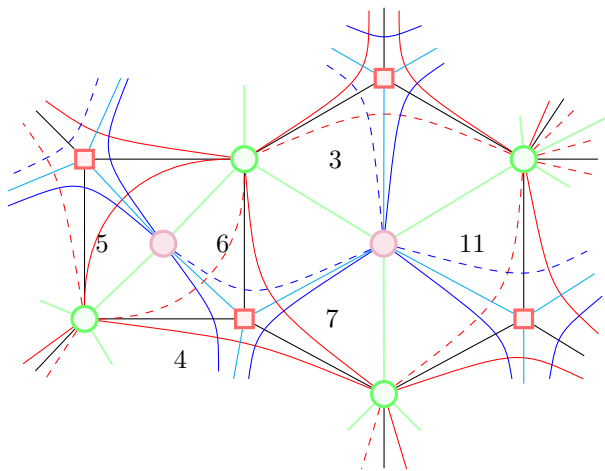

As a reference, in Figure \ref{fig:11}, we put all four hypermaps \(H,H^*,H^\triangle,H^\nabla\) with their common Pradeep's surface codes in to a single local picture of Figure \ref{fig:1}. Here, we do not explicitly draw out the edges or vertices of $H^\triangle$ and $H^\nabla$, but use round nodes in different colors to represent the two sets of sites of the underline bipartite graph, with only the green lines left which represent their common darts. Also, we use the blue (red) curves to represent a common set of Pradeep's curves of $\mathcal{C}^*$ ($\mathcal{C}$) and $\mathcal{C}^\nabla$ ($\mathcal{C}^\triangle$).\par
Now, if all four sets of special darts are corresponded in the natural way, then we can study the relations between the pair $(\mathcal{C},\mathcal{C}^*)$ by studying those of the equivalent pair $(\mathcal{C}^\triangle,\mathcal{C}^\nabla)$, and we see immediately that $\mathcal{C}^\triangle$ and $\mathcal{C}^\nabla$ have their underling topological hypermaps simply the contrary maps of each others, in addition, their special darts are geometrically coincide ($\Omega_i^{*}=\Omega_i$). Intuitively, from figure \ref{fig:11}, we can see that within a face of $H^{\triangle(\nabla)}$, the Pradeep's curves of the pair are dual to each other in the sense that they share opposite sets of bipartite nodes, actually, this last simple phenomena is the primary observation of this article.

\bibliographystyle{unsrt}
\bibliography{hypermap}

\appendix
\appendixpage
\addappheadtotoc
\section{Construction of a Topological hypermap from a conbinatorial one}
Notice that when the permutation \(\alpha^{-1}\sigma\) continuously acts on a dart, the dart will circle around it's face, i.e, the orbit of the dart under the action of subgroup \(<\alpha^{-1}\sigma>\) is the face to which the dart belongs. Actually we can construct an oriented topological hypermap by finding the orbits of the subgroup \(<\alpha^{-1}\sigma>\) of a combinatorial hypermap \((\alpha, \sigma)\) and then gluing these `faces' together.

Let's suppose that we have found all orbits of \(<\alpha^{-1}\sigma>\), they form a partition \(I\) of  \(B_n=\{1,2,...,n\}\). When all subsets of the partition is acted by \(\alpha^{-1}\), they form another partition \(O\), which, in the topological hypermaps case, represents the `outer' darts of the faces (Recall Figure 2.). Now, every element of \(B_n\) belongs exactly to one subset of \(I\) and one subset of \(O\). For each orbit, which consists of elements \(\{i_1, i_2,...,i_s\}\) with \(i_{k+1}=\alpha^{-1}\sigma(i_k)\) for any \(i_k\) with \(k<s\) and \(i_{1}=\alpha^{-1}\sigma(i_s)\) , we draw a face (a polygon) on the paper in the way as Figure 2 shows, which has s inner edges label by \(i_k\) and s outer edges label by \(\alpha^{-1}(i_k)\) . Each element of \(B_n\) labels an inner edge of precisely one of these faces ( because for any of these faces, its inner edges are exactly one subset of partion \(I\) ) and a outer edge of precisely one of these faces (because for any one of these faces, its outer edges are exactly one subset of partion \(O\)), and we can glue the corresponding faces along the two edges labeled by this element. Now, for all element of \(B_n\), we glue faces this way. (There is possibility that a face be glued with itself, which happens when \(i_k=\alpha^{-1}(i_j)\) for some \(k\), \(j\in \{1,2,...,s\}\).) All faces have their naturally pointed normal fields, i.e., normal vectors pointing out from paper, this helps us matching the orientations when we glue an inner edge with an outer edge. All we need to do is to glue squared node with squared node, round node with round node, then their naturally pointed normal fields will be matched together and form a global normal field. To see this, notice that when walking along an edge from a round node to a squared node, we are actually clockwise circling the normal vectors of the face when its an inner edge, otherwise, we would circle the normal vectors counterclockwise. When edges are being glued the right way, i.e., round node to round node, squared node to squared node,  we will see the local picture right before gluing from some suitable direction as Figure \ref{fig:3} shows in which the left-side edge is a outer edge, while the right-side one, an inner edge. Now, according to the previous observation, both of the  normal fields of face \(f_1\) and \(f_2\) shoud be pointing again from paper to us (at least, within the local picture), which means that their naturally normal vector fields are perfectly matched along the edge we glued. Denote \(\Sigma\) for the space after gluing, the neighbors of every points of \(\Sigma\) are now surface like, i.e., homeomorphic to \(\mathbf{R}^2\), excepting those that come from nodes (squared or round) of the original faces, we  call them sites.
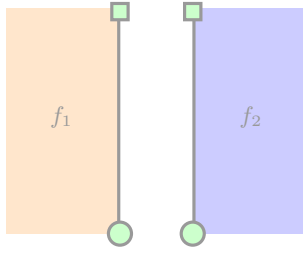
\begin{wrapfigure}{l}{0.35\textwidth}
\begin{tikzpicture}[
squarednode/.style={rectangle, draw=black!40, fill=green!20, very thick, minimum size=2mm},
roundnode/.style={circle, draw=black!40, fill=green!20, very thick, minimum size=2mm},
fakenode/.style={rectangle, very thick, minimum size=2mm},
capnode1/.style={rectangle, draw=blue!20, fill=blue!20, very thick, minimum size=2mm},
capnode2/.style={rectangle, draw=orange!20, fill=orange!20, very thick, minimum size=2mm},
capnode3/.style={rectangle, draw=blue!0, fill=blue!0, very thick, minimum size=2mm},
]
\fill[orange!20] (-0.5,0) rectangle (-2,3);
\fill[blue!20] (0.5,0) rectangle (2,3);

\node[fakenode]      (site 1)        {};
\draw[very thick, black!40] (0.5,0) -- (0.5,3);
\node[roundnode]      (vertex 1)       [right=0.18cm of site 1] {};
\node[squarednode]      (edge 1)       [above=2.65 of vertex 1] {};
\draw[very thick, black!40] (-0.5,0) -- (-0.5,3);
\node[roundnode]      (vertex 2)       [left=0.18cm of site 1] {};
\node[squarednode]      (edge 2)       [above=2.65cm of vertex 2] {};
\node[fakenode]      (site 2)       [right=1cm of site 1] {};
\node[fakenode]      (site 3)       [left=1cm of site 1] {};
\node[capnode1]      (site 4)       [above=1.15cm of site 2]{\textcolor{black!40}{\(f_2\)}};
\node[capnode2]      (site 5)       [above=1.15cm of site 3]{\textcolor{black!40}{\(f_1\)}};
\end{tikzpicture}
\caption{A local picture right before gluing} 
\label{fig:3}
\end{wrapfigure}
In the vicinity of these sites, each dart (glued edges) has two and only two faces that incident on it, therefore, the neighbor of these sites must be a single cone (It's impossible to have two or more cones with only their apex glued into a single site.) without self intersection, which is topologically a disk. So, \(\Sigma\) is surface like near every points---we've got an oriented surface! The edges and nodes of the original faces becomes the darts and sites of an hypermap on \(\Sigma\), and the interiors of the faces themselves become the faces of the hypermap. We've transformed a combinatorial hypermap to a topological one, and it's easy to check that this topological hypermap with normal field the one we glued are exactly the original hypermap \((\alpha, \sigma)\) in the sense of permutations on \(B_n\), and the transformations between these two kinds of hypermaps are mutually reversible. Further more, every two darts of the newly constructed hypermap can be mutually transformed to each other by circling around the sites for finite many times due to the transitivity of \(<\sigma, \alpha>\) , which indicates that the surface is connected. Also, the orbits of \(<\alpha^{-1}\sigma>\) are finite, which indicates that the surface is compact.\par
 As an example, let's construct a hypermap-homology quantum code of the pair \((\alpha, \sigma)\) with \(\alpha=\)(4 3 2 1)(5 7 8 6), \(\sigma=\)(7 1 6 3)(5 2 8 4) in the permutation group \(S_8\). The hypermap has 2 edges, \(e_1=\)(4 3 2 1), \(e_2=\)(5 7 8 6), and 2
 vertices \(v_1=\)(7 1 6 3), \(v_2=\)(5 2 8 4). The faces can be calculated as follow:
 \begin{equation}\label{faces}
    \begin{aligned}
    \alpha^{-1}\sigma & = 
    \begin{pmatrix}
    1 & 2 & 3 & 4 & 5 & 6 & 7 & 8\\
    8 & 7 & 5 & 6 & 3 & 4 & 2 & 1
    \end{pmatrix}\\
                      & = (1\quad8)(2\quad7)(3\quad5)(4\quad6)
    \end{aligned}
    \end{equation}
    from equation \ref{faces}, we draw 4 faces as Figure \ref{fig:5} shows, 
    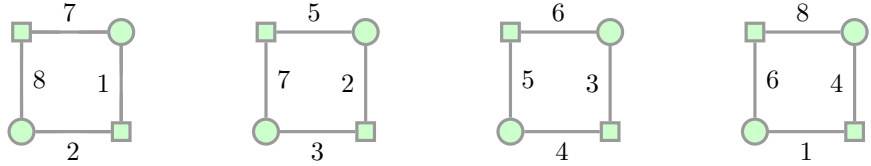
\begin{figure}[h]
        \centering
        \begin{tikzpicture}[
        squarednode/.style={rectangle, draw=black!40, fill=green!20, very thick, minimum size=2mm},
        roundnode/.style={circle, draw=black!40, fill=green!20, very thick, minimum size=2mm},
        fakenode/.style={rectangle, draw=orange!0, fill=blue!0, minimum size=1mm},
        ]
        \node[roundnode]      (vertex 1)       [] {};
        \node[squarednode]      (edge 2)       [right=1cm of vertex 1] {};
        \node[roundnode]      (vertex 2)       [right=1.6cm of edge 2] {};
        \node[squarednode]      (edge 4)       [right=1cm of vertex 2] {};
        \node[roundnode]      (vertex 3)       [right=1.6cm of edge 4] {};
         \node[squarednode]      (edge 6)       [right=1cm of vertex 3] {};
          \node[roundnode]      (vertex 4)       [right=1.6cm of edge 6] {};
          \node[squarednode]      (edge 8)       [right=1cm of vertex 4] {};
          \node[squarednode]      (edge 1)       [above=1cm of vertex 1] {};
          \node[squarednode]      (edge 3)       [above=1cm of vertex 2] {};
          \node[squarednode]      (edge 5)       [above=1cm of vertex 3] {};
          \node[squarednode]      (edge 7)       [above=1cm of vertex 4] {};
          \node[roundnode]      (vertex 8)       [above=1cm of edge 2] {};
           \node[roundnode]      (vertex 7)       [above=1cm of edge 4] {};
            \node[roundnode]      (vertex 6)       [above=1cm of edge 6] {};
             \node[roundnode]      (vertex 5)       [above=1cm of edge 8] {};
             \node[fakenode]      (site 1)       [right=0.375cm of vertex 1] {}; 
             \node[fakenode]      (site 2)       [below=-0.1cm of site 1] {2};
             \node[fakenode]      (site 3)       [right=0.375cm of edge 1] {};
             \node[fakenode]      (site 4)       [above=-0.1cm of site 3] {7};
             \draw [very thick, black!40](vertex 8) -- (edge 1);
             \draw [very thick, black!40](vertex 1) -- (edge 2);
             \node[fakenode]      (site 5)       [below=0.375cm of edge 1] {}; 
             \node[fakenode]      (site 6)       [right=-0.1cm of site 5] {8}; 
             \draw [very thick, black!40](vertex 1) -- (edge 1);
             \node[fakenode]      (site 7)       [below=0.375cm of vertex 8] {}; 
             \node[fakenode]      (site 8)       [left=-0.1cm of site 7] {1}; 
             \draw [very thick, black!40](vertex 8) -- (edge 2);
             
             \node[fakenode]      (site 9)       [right=0.375cm of vertex 1] {}; 
             \node[fakenode]      (site )       [below=-0.1cm of site 1] {2};
             \node[fakenode]      (site 3)       [right=0.375cm of edge 1] {};
             \node[fakenode]      (site 4)       [above=-0.1cm of site 3] {7};
             \draw [very thick, black!40](vertex 8) -- (edge 1);
             \draw [very thick, black!40](vertex 1) -- (edge 2);
             \node[fakenode]      (site 5)       [below=0.375cm of edge 1] {}; 
             \node[fakenode]      (site 6)       [right=-0.1cm of site 5] {8}; 
             \draw [very thick, black!40](vertex 1) -- (edge 1);
             \node[fakenode]      (site 7)       [below=0.375cm of vertex 8] {}; 
             \node[fakenode]      (site 8)       [left=-0.1cm of site 7] {1}; 
             \draw [very thick, black!40](vertex 8) -- (edge 2);
             
             \node[fakenode]      (site 1')       [right=0.375cm of vertex 2] {}; 
             \node[fakenode]      (site 2')       [below=-0.1cm of site 1'] {3};
             \node[fakenode]      (site 3')       [right=0.375cm of edge 3] {};
             \node[fakenode]      (site 4')       [above=-0.1cm of site 3'] {5};
             \draw [very thick, black!40](vertex 7) -- (edge 3);
             \draw [very thick, black!40](vertex 2) -- (edge 4);
             \node[fakenode]      (site 5')       [below=0.375cm of edge 3] {}; 
             \node[fakenode]      (site 6')       [right=-0.1cm of site 5'] {7}; 
             \draw [very thick, black!40](vertex 2) -- (edge 3);
             \node[fakenode]      (site 7')       [below=0.375cm of vertex 7] {}; 
             \node[fakenode]      (site 8')       [left=-0.1cm of site 7'] {2}; 
             \draw [very thick, black!40](vertex 7) -- (edge 4);
            
             \node[fakenode]      (site 1'')       [right=0.375cm of vertex 3] {}; 
             \node[fakenode]      (site 2'')       [below=-0.1cm of site 1''] {4};
             \node[fakenode]      (site 3'')       [right=0.375cm of edge 5] {};
             \node[fakenode]      (site 4'')       [above=-0.1cm of site 3''] {6};
             \draw [very thick, black!40](vertex 6) -- (edge 5);
             \draw [very thick, black!40](vertex 3) -- (edge 6);
             \node[fakenode]      (site 5'')       [below=0.375cm of edge 5] {}; 
             \node[fakenode]      (site 6'')       [right=-0.1cm of site 5''] {5}; 
             \draw [very thick, black!40](vertex 3) -- (edge 5);
             \node[fakenode]      (site 7'')       [below=0.375cm of vertex 6] {}; 
             \node[fakenode]      (site 8'')       [left=-0.1cm of site 7''] {3}; 
             \draw [very thick, black!40](vertex 6) -- (edge 6);
             
              \node[fakenode]      (site 1''')       [right=0.375cm of vertex 4] {}; 
             \node[fakenode]      (site 2''')       [below=-0.1cm of site 1'''] {1};
             \node[fakenode]      (site 3''')       [right=0.375cm of edge 7] {};
             \node[fakenode]      (site 4''')       [above=-0.1cm of site 3'''] {8};
             \draw [very thick, black!40](vertex 5) -- (edge 7);
             \draw [very thick, black!40](vertex 4) -- (edge 8);
             \node[fakenode]      (site 5''')       [below=0.375cm of edge 7] {}; 
             \node[fakenode]      (site 6''')       [right=-0.1cm of site 5'''] {6}; 
             \draw [very thick, black!40](vertex 4) -- (edge 7);
             \node[fakenode]      (site 7''')       [below=0.375cm of vertex 5] {}; 
             \node[fakenode]      (site 8''')       [left=-0.1cm of site 7'''] {4}; 
             \draw [very thick, black!40](vertex 5) -- (edge 8);

        \end{tikzpicture}
        \caption{Faces of \((\alpha, \sigma)\)  }
        \label{fig:5}
    \end{figure}
    
    \noindent After labeling their inner edges, the outer edges can also be labeled using \(\alpha^{-1}\) as was shown in Figure 2. Now we can glue the four faces by gluing (Of course, squared node to squared node, round node to round node.) an inner edge with a outer edge that has the same label, and we have got a topological hypermap. To construct a quantum code, let us choose dart 2 and 5 as special darts, then a basis for \(\mathcal{W}/\iota(\mathcal{E})\) is (1, 3, 4, 6, 7, 8) with \(i\) represent the equivalence class \(\omega_{i}+\iota(\mathcal{E})\). Also, a basis for \(\mathcal{F}\) is (\(f_1\), \(f_2\), \(f_3\), \(f_4\)), which represents the faces from left to right in Figure \ref{fig:5}, and a basis for \(\mathcal{V}\) is (\(v_1\), \(v_2\)). Under these bases, the binary check matrix \(A\) (see 2.1) is calculated as:
    \begin{equation}\label{b mat}
    \begin{aligned}
    H_X & =
    \begin{pmatrix}
    1 & 1 & 1 & 1 & 1 & 1 \\
    1 & 1 & 1 & 1 & 1 & 1 
    \end{pmatrix}
    \qquad
    H_Z & =
    \begin{pmatrix}
    1 & 0 & 0 & 0 & 0 & 1 \\
    1 & 1 & 1 & 0 & 1 & 0 \\
    0 & 1 & 0 & 1 & 1 & 1 \\
    0 & 0 & 1 & 1 & 0 & 0 
    \end{pmatrix}
     \end{aligned}
    \end{equation}
    
\noindent from matrix \(A\), we can directly write out the generators for the hypermap-homology code: 
\begin{equation}
\begin{aligned}
X_{v_1}=X_{v_2} & =X_1X_3X_4X_6X_7X_8\\
Z_{f_1} & =Z_1Z_8\\
Z_{f_2} & =Z_1Z_3Z_4Z_7\\
Z_{f_3} & =Z_3Z_6Z_7Z_8\\
Z_{f_4} & =Z_4Z_6
\end{aligned}
\end{equation}
only 4 of the above 6 operators are independent, say \(X_{v_1}\), \(Z_{f_1}\), \(Z_{f_2}\), \(Z_{f_3}\), which can be seem from equation \ref{b mat}. This tells us that we have \(6-4=2\) logic qubits---\(k=\dim{H_1}=2\). To physically realize a logic qubit, we can firstly attach a qubit \(|\phi_i\rangle\) to each non-special dart \(i\in{B\setminus{S}}\), which makes the above operators local, then project the whole system \(|\psi\rangle=\bigotimes_{i\in{B\setminus{S}}}|\phi_i\rangle\) to the stabilizer code space \(\mathcal{C}\) by measuring the four operators and then applying some extra gates according to the results.

\section{Degenerated cases}
Our description about the construction of the  Pradeep's curves is slightly different from that in \cite{Pradeep}, which makes it more straightforward to deal with singular cases like the two shown in Figure \ref{fig:sing} (Dashed line represents a erased curve which however, we are interested in.).
\begin{figure}[h]
    \centering
    \begin{tikzpicture}[
    squarednode/.style={rectangle, draw=black!80, fill=green!0, thick, minimum size=2mm},
    roundnode/.style={circle, draw=black!80, fill=green!0, thick,  minimum size=2mm},
    fakenode/.style={rectangle, draw=orange!0, fill=blue!0, minimum size=1mm},
        ]
    \draw[red!80,thick] (1.3,1.3) .. controls (0.8,1.4) and (0.6,1.3) .. (0.5,1.2);   
    \draw[red!80,thick] (1.38,1.38) .. controls (1.4,0.8) and (1.3,0.6) .. (1.2,0.5); 
    \draw[red!80,thick] (1.3,1.3) -- (1.0,1.75);
    \draw[red!80,thick] (1.3,1.3) -- (1.7,0.9); 
    \draw[red!80,thick] (1.3,1.3) -- (1.6,1.75);
    \node[squarednode]    (edge 1) {};
     \node[fakenode] (site 2) [right=0.5cm of edge 1] {};
    \node[fakenode] (site 3) [above=0.1cm of site 2] {\(i\)};
    \node[fakenode]   (site 1) [above=1cm of edge 1] {};
    \node[roundnode]    (vertex 1) [right=1cm of site 1 ] {};
    \draw[black!40,thick] (vertex 1.south) .. controls +(down:5mm) and +(right:5mm) .. (edge 1.east);
    \draw [black!40,thick](vertex 1.west) .. controls +(left:5mm) and +(up:5mm) .. (edge 1.north);
    \draw[red!80,thick] (vertex 1) .. controls (0.4,1) and (1,0.4) .. (vertex 1);
    \draw[black!40,thick] (vertex 1) -- (1.3,1.8);
    \draw[black!40,thick] (vertex 1) -- (1.85,1.28);
    \draw [black!40,thick](edge 1) -- (-0.5,0);
    \draw [black!40,thick](edge 1) -- (-0.4,-0.4);
    \draw[black!40,thick] (edge 1) -- (0,-0.5);
     \node[fakenode] (site 14) [right=7.25cm of edge 1] {};
    \node[fakenode] (site 15) [above=0.4cm of site 14] {\(i\)};
     \node[roundnode]    (vertex 2) [right=6cm of vertex 1 ] {};
    \node[squarednode]   (edge 2) [below=0.5cm of vertex 2] {};
    \node[squarednode]   (edge 3) [left=1cm of edge 2] {};
    \node[squarednode]   (edge 4) [right=1cm of edge 2] {};
    \node[fakenode]   (site 4) [below=1.3cm of vertex 2] {};
    \node[fakenode]   (site 5) [left=1.3cm of site 4] {};
    \node[fakenode]   (site 6) [right=1.3cm of site 4] {};
     \draw [black!40,thick](edge 2) -- (vertex 2);
     \draw [black!40,thick](edge 3) -- (vertex 2);
     \draw [black!40,thick](edge 4) -- (vertex 2);
     \draw [black!40,thick](edge 3) -- (site 5);
     \draw [black!40,thick](edge 4) -- (site 6);
      \draw[red!80, thick, dashed] (vertex 2) .. controls (site 5) and (site 6) .. (vertex 2);
      \node[fakenode]   (site 7) [above=1cm of site 5] {};
      \node[fakenode]   (site 8) [above=1cm of site 6] {};
      \draw [black!40,thick](edge 3) -- (site 7);
      \draw [black!40,thick](edge 4) -- (site 8);
      \node[fakenode]   (site 9) [above=0.2cm of vertex 2] {};
       \draw [black!40,thick](vertex 2) -- (site 9);
       \node[fakenode]   (site 10) [left=0.1cm of site 9] {};
      \node[fakenode]   (site 11) [right=0.1cm of site 9] {}; 
       \draw[red!80,thick] (vertex 2) -- (site 10);
        \draw[red!80,thick] (vertex 2) -- (site 11);
        \node[fakenode]   (site 12) [above=0.1cm of edge 3] {};
        \node[fakenode]   (site 13) [above=0.1cm of edge 4] {};
        \draw[red!80,thick] (vertex 2) -- (site 12);
    
    \end{tikzpicture}
    \caption{Cases when \(v_{\owns{i}}=v_{\owns{\alpha^{-1}(i)}}\). }
    \label{fig:sing}
\end{figure}
The left part of Figure \ref{fig:sing} is a local picture of the case \(v_{\owns{i}}=v_{\owns{\alpha^{-1}(i)}}\) but \(i\neq{\alpha^{-1}(i)}\). Here, the red line represents the Pradeep's curves, and the explicitly labeled dart is non-special. This dart passes its label \(i\)  to the red self-circle inside the face to which it belongs. For cases where \(i={\alpha^{-1}(i)}\), \(\omega_i\) is the only dart that incident on the edge \(e_{\owns{i}}\), and is therefore an special dart. Thus, no matter how we draw the curve \(i\), it is to be erased, as is shown in the right part of Figure \ref{fig:sing} (For edge hypermap codes, it is the self-circle of the left part figure that should be erased.  ).\par
\section{An ambiguity}
For a specific example of Pradeep's construction, see Figure 8 and 9 of his paper \cite{Pradeep}. To us, his example contains a subtle situation in which our slightly modified procedure may encompass some ambiguities, i.e., there are two ways of drawing curves in the situation shown as follow:\par
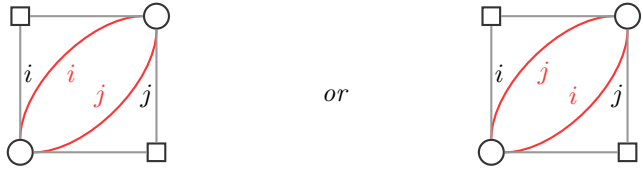
\begin{figure}[ht]
    \centering
    \begin{tikzpicture}[
    squarednode/.style={rectangle, draw=black!80, fill=green!0, thick, minimum size=2mm},
    roundnode/.style={circle, draw=black!80, fill=green!0, thick,  minimum size=2mm},
    fakenode/.style={rectangle, draw=orange!0, fill=blue!0, minimum size=1mm},
        ]
        
        \node[squarednode]    (edge 1) {};
        \node[fakenode]    (site 1) [below=0.5cm of edge 1 ] {};
         \node[fakenode]    (site 2) [right=-0.2cm of site 1 ] {\(i\)};
          \node[fakenode]    (site 3) [right=0.2cm of site 2 ]     {\textcolor{red!80}{\(i\)}};
          \node[roundnode]    (vertex 1) [right=1.5cm of edge 1 ] {};
        \node[roundnode]    (vertex 2) [below=1.5cm of edge 1 ] {};
        \node[squarednode]    (edge 2) [below=1.5cm of vertex 1 ] {};
         \node[fakenode]    (site 1') [above=0.5cm of edge 2 ] {};
         \node[fakenode]    (site 2') [left=-0.2cm of site 1' ]
         {\(j\)};
         \node[fakenode]    (site 3') [left=0.2cm of site 2' ]     {\textcolor{red!80}{\(j\)}};
        \draw[red!80,thick] (vertex 1.south) .. controls +(down:7mm) and +(right:7mm) .. (vertex 2.east);
         \draw[red!80,thick] (vertex 1.west) .. controls +(left:7mm) and +(up:7mm) .. (vertex 2.north);
        \draw [black!40,thick](edge 2) -- (vertex 2);
        \draw [black!40,thick](edge 2) -- (vertex 1);
        \draw [black!40,thick](vertex 1) -- (edge 1);
        \draw [black!40,thick](vertex 2) -- (edge 1);
        
         \node[fakenode]    (site or) [right=2cm of site 2' ] {\emph{or}};
        
         \node[squarednode]    (edge 1) [right=6cm of edge 1 ] {};
        \node[fakenode]    (site 1) [below=0.5cm of edge 1 ] {};
         \node[fakenode]    (site 2) [right=-0.2cm of site 1 ] {\(i\)};
          \node[fakenode]    (site 3) [right=0.2cm of site 2 ]     {\textcolor{red!80}{\(j\)}};
          \node[roundnode]    (vertex 1) [right=1.5cm of edge 1 ] {};
        \node[roundnode]    (vertex 2) [below=1.5cm of edge 1 ] {};
        \node[squarednode]    (edge 2) [below=1.5cm of vertex 1 ] {};
         \node[fakenode]    (site 1') [above=0.5cm of edge 2 ] {};
         \node[fakenode]    (site 2') [left=-0.2cm of site 1' ]
         {\(j\)};
         \node[fakenode]    (site 3') [left=0.2cm of site 2' ]     {\textcolor{red!80}{\(i\)}};
        \draw[red!80,thick] (vertex 1.south) .. controls +(down:7mm) and +(right:7mm) .. (vertex 2.east);
         \draw[red!80,thick] (vertex 1.west) .. controls +(left:7mm) and +(up:7mm) .. (vertex 2.north);
        \draw [black!40,thick](edge 2) -- (vertex 2);
        \draw [black!40,thick](edge 2) -- (vertex 1);
        \draw [black!40,thick](vertex 1) -- (edge 1);
        \draw [black!40,thick](vertex 2) -- (edge 1);

    \end{tikzpicture}
    \caption{ Ambiguity occurs when \(d_1\omega_i=d_1\omega_j\), but \(i\neq{j}\).}
    \label{fig:7}
\end{figure}
 \noindent In the left part of Figure \ref{fig:7}, curve \(i\) is closer to dart \(i\), while in the right part, curve \(j\) is closer to dart \(i\). Although both left and right part of Figure \ref{fig:7} is the correct way of adding curves according to our 3 conditions proposed in the main article, the right side one is different from Pradeep's original example. This would not cause any problem, for the domain enclosed by curve \(i\) and \(j\) is a 2-cell, and does not affect the whole topological structure no matter which one of the two curves is erased when it's necessary, further more, the existence of the 2-cell \(f^*_i\) in the main article is also not affected (\(f^*_i\) can be seem as a flower that is formed by attaching the petal-cells enclosed by curves of the darts \(\omega\in<\alpha^{-1}>\cdot{\omega_{k_j}^{i}}
    \)  for each special dart \(\omega_{k_j}^{i}\) to the main disc enclosed by curves of all the inner darts \(\omega_{n}^{i}\).  These petal-cells  must have no intersection, which is not affected by the different ways of Figure \ref{fig:7}.).

\medskip

\end{document}